\documentclass[12pt]{article}
\usepackage{amsmath, amsthm, amscd, amsfonts, amssymb, graphicx, color}

\newtheorem{theorem}{Theorem}%[section]
\newtheorem{corollary}[theorem]{Corollary}
%\theoremstyle{definition}

%\theoremstyle{remark}

%\numberwithin{equation}{section}

\def\title#1{{\Large\bf  \begin{center} #1 \vspace{0pt} \end{center}  } }
\def\authors#1{{\large\bf \begin{center} #1 \vspace{0pt} \end{center} } }
\def\university#1{{\sl \begin{center} #1 \vspace{0pt} \end{center} } }
\def\inst#1{\unskip $^{#1}$}
\newcommand{\vertiii}[1]{{\left\vert\kern-0.25ex\left\vert\kern-0.25ex\left\vert 
#1 \right\vert\kern-0.25ex\right\vert\kern-0.25ex\right\vert}}
\usepackage{amsfonts}

\def \argmin {\mbox{\boldmath $\mbox{argmin}$}}
\def \Im   {\text {\rm Im}}

\def \tr   {\text {\rm tr}}
\def \Re   {\text {\rm Re}}
\def \det   {\text {\rm det}}
\def \diag  {\text {\rm diag}}
\def \Eig  {\text {\rm Eig}}

\begin{document}

%%%%%%%%%%%%%%%%%%%%%%%%%%%%%%%%%%%%%%%%%%%%%%%%%%%%
%
%%%%%%%%%%%%%%%%%AUTHOR_STYLES_AND_DEFINITIONS%%%
%
%%%%%%%%%%%% Please reduce your own definitions and macros to an absolute
%%%%%%%%%%%% minimum. Use your own definitions and macros if it is
%%%%%%%%%%%% absolutely necessary for typesetting your manuscript.
%
%%%%%%%%%%%%%%%%%%%%%%%%%%%%%%%%%%%%%%%%%%%%%%%%%%%%
%%%%%%%%%%%                               %%%%%%%%%%%%%%%%%%%
%%%%%%%%%%%  TITLE OF ARTCLE%%%%%%%%%%%%%%%%%

\title{On symplectic eigenvalues of positive definite matrices}

\bigskip

%%%%%%%%%%%%%%%%%%%%%%%%%%%%%%%%%%%%%%%%%%%%%%%%%%%%
%
% \authors specifies the authors. Please use initials.
% Use the \inst{1} , \inst{2} & \inst{3} commands
% to define the reference mark to your affiliation.
%
%%%%%%%%%%%%%%%%%%%%%%%%%%%%%%%%%%%%%%%%%%%%%%%%%%%%

\authors{Rajendra Bhatia\inst{1}, % PLEASE REMOVE ',' IN CASE OF SINGLE AUTHOR 
%AND MAY KEEP '\&' INCASE OF ONLY TWO AUTHORS
%%%%%%%%%%%%
% Shmuel Friedland\inst{2}, %/&  % PLEASE USE '%' IN THE BEGINNING IN  CASE OF 
%SINGLE AUTHOR;  REMOVE ',' IN CASE OF TWO AUTHORS
%%%%%%%%%%%%
%Olga Holtz\inst{3} \& \,\,\, % PLEASE KEEP '%' IN THE BEGINNING IN CASE OF 
%LESSTHAN 
%3 AUTHORS
%%%%%%%%%%%%
Tanvi Jain\inst{2} % PLEASE KEEP '%' IN THE BEGINNING IN CASE OF 
%LESS THAN 3 AUTHORS
}

% The name of the author who will deliver a talk (poster) is underlined

\smallskip

%%%%%%%%%%%%%%%%%%%%%%%%%%%%%%%%%%%%%%%%%%%%%%%%%%%%
%
% The \university command lets you specify your affiliation.
% Please, write only the name of your university, city (if necessary) and country.
% Seperate two or more different affiliations by the \\ command.
%
%%%%%%%%%%%%%%%%%%%%%%%%%%%%%%%%%%%%%%%%%%%%%%%%%%%%

\university{\inst{1} Indian Statistical Institute, New Delhi 110016, India\\rbh@isid.ac.in}
% \university{\inst{2} Department of Mathematics, Statistics and Computer Science, University of Illinois at Chicago, Chicago, 60607-7045, USA \\ friedlan@uic.edu}% PLEASE KEEP  IN 
%THE BEGINNING OF LINE'%' INCASE 
%OFLESS THAN 2 INSTT
%\university{\inst{3} University of California, Berkeley, USA}% PLEASE KEEP  IN 
%THE BEGINNING OF LINE'%' INCASE 
%OF LESS THAN 3 INSTT
\university{\inst{2} Indian Statistical Institute, New Delhi 110016, 
India\\tanvi@isid.ac.in}% 
%PLEASE KEEP  IN THE %BEGINNING OF LINE'%' IN CASE OF LESS THAN 3 INSTT

\begin{abstract}
If $A$ is a $2n \times 2n$ real positive definite matrix, then there exists a 
symplectic matrix $M$ such that $M^TAM = \left [ \begin{array}{cc} D & O \\ O & 
D \end{array} \right ]$ where $D= \diag (d_1 (A), \ldots, d_n(A))$ is 
a diagonal matrix with positive diagonal entries, which are called the 
symplectic eigenvalues of $A.$ In this paper we derive several fundamental 
inequalities about these numbers. Among them are relations between the 
symplectic eigenvalues of $A$ and those of  $A^t,$ between the symplectic 
eigenvalues of $m$ matrices $A_1, \ldots, A_m$ and of their Riemannian mean, a 
perturbation theorem, some variational  principles, and some inequalities 
between the symplectic and ordinary eigenvalues.
\end{abstract}

\bigskip
\vskip0.3in
\noindent {\bf AMS Subject Classifications :} 15A90, 81P45, 81S10.\\
%\subjclass[2000] {15A48, 47B34.}

\noindent {\bf Keywords : } Symplectic matrix, positive definite matrix, symplectic eigenvalues, Williamson's theorem, majorisation, Riemannian mean.\\
% \noindent {\large\bf Keywords}
% 
% \medskip
% 
% \noindent 
% 
% Matrix partial order, Sharp partial order, Shorted matrix.
% .

\section{Introduction} Let $\mathbb{M}(2n)$ be the space of $2n \times 2n$ real 
matrices, $\mathbb{P}(2n)$ the subset of $\mathbb{M}(2n)$ consisting of 
positive definite matrices, and $Sp(2n)$ the group of real symplectic matrices; 
i.e., 
$$Sp(2n) = \left \{ M \in \mathbb{M}(2n) : M^T JM = J \right \}.$$
Here $J = \left [\begin{array}{cc} O & I \\ -I & O \end{array} \right ],$ and 
$J$ itself is a symplectic matrix.

If $A$ is an element of $\mathbb{P} (2n),$ then there exists a symplectic 
matrix $M$ such that
\begin{equation}
M^T AM = \left [\begin{array}{cc} D & O \\ O & D \end{array} \right ], 
\label{eq1}
\end{equation}
where $D$ is a diagonal matrix with positive entries
\begin{equation}
 d_1 (A) \leq d_2 (A) \leq \cdots \leq d_n (A). \label{eq2}
\end{equation}
This is often called Williamson's theorem \cite{ar}, \cite{g}, \cite{p}. In 
\cite{hz} it is pointed out that this was known to Weierstrass. The numbers 
$d_j(A)$ are uniquely determined by $A$ and characterise the orbits of 
$\mathbb{P}(2n)$ under the action of the group $Sp (2n).$ We call them the {\it 
symplectic eigenvalues} of $A.$  They play an important role in classical 
Hamiltonian dynamics \cite{ar}, in quantum mechanics \cite{dms}, in symplectic topology \cite{hz}, and in the 
more recent subject of quantum information; see e.g., \cite{e}, \cite{h}, \cite{ktv}, 
\cite{p}.

The goal of this paper is to present some fundamental inequalities for 
symplectic eigenvalues.

It is clear from the definition that if the symplectic eigenvalues of $A$ are 
enumerated as in \eqref{eq2}, then those of $A^{-1}$ are
\begin{equation}
\frac{1}{d_n(A)} \leq \frac{1}{d_{n-1} (A)} \leq \cdots \leq  \frac{1}{d_1(A)}. 
 \label{eq3}
\end{equation}
No relation between the symplectic eigenvalues of $A$ and those of $A^t$ is 
readily apparent. Our first theorem unveils such relationships.

Given $x \in \mathbb{R}^m_{+},$ we denote by $x^{\downarrow} = \left 
(x_1^{\downarrow}, \ldots, x_m^{\downarrow} \right )$ the vector whose 
coordinates are the coordinates of $x$ rearranged in decreasing order 
$x_1^{\downarrow} \ge \cdots \ge x_m^{\downarrow}.$ If $x$ and $y$ are two 
$m$-vectors with positive coordinates, then we say that $x$ is {\it log 
majorised} by $y,$ in symbols  $x \prec_{\log} y,$ if
\begin{equation}
\prod_{j=1}^{k} x_j^{\downarrow} \leq \prod_{j=1}^{k} y_j^{\downarrow}, \quad 1 
\leq k \leq m   \label{eq4}
\end{equation}
and 
\begin{equation}
\prod_{j=1}^{m} x_j^{\downarrow} = \prod_{j=1}^{m} y_j^{\downarrow}.   
\label{eq5}
\end{equation}
By classical theorems of Weyl and Polya, $\log$ majorisation implies the 
usual {\it weak majorisation} relation $x \prec_{w} y$ characterised by the 
inequalities
\begin{equation}
\sum_{j=1}^{k} x_j^{\downarrow} \leq \sum_{j=1}^{k} y_j^{\downarrow}, \quad 1 
\leq k \leq m.         \label{eq6}
\end{equation}
See Chapter II of \cite{rbh}.

It is convenient to introduce a $2n$-vector $\widehat{d}(A)$ whose coordinates 
are
\begin{equation}
\widehat{d}_1 (A) \ge  \widehat{d}_2 (A) \ge \cdots \ge   \widehat{d}_{2n} (A), 
\label{eq7}
\end{equation}
which are the symplectic eigenvalues of $A,$ each counted twice and rearranged 
in 
{\it decreasing} order. (Thus $\widehat{d}_1 (A) = \widehat{d}_2 (A) = d_n (A)$ and 
$\widehat{d}_{2n-1} (A) = \widehat{d}_{2n} (A) = d_1 (A).$) With
these notations we have the following.

\begin{theorem}\label{thm1}
Let $A$ be any element of $\mathbb{P}(2n).$ Then
\begin{equation}
 \widehat{d} (A^t) \prec_{\log}  \widehat{d\,\,}^t (A) \quad \mbox{for}\quad 0 
\leq 
t \leq 1, \label{eq8}
\end{equation}
and 
\begin{equation}
   \widehat{d\,\,}^t (A)   \prec_{\log}  \widehat{d} (A^t)\quad \mbox{for}\quad 
1 
\leq t < \infty.  \label{eq9}
\end{equation}
\end{theorem}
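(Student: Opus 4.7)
My plan is to convert the statement about symplectic eigenvalues into one about ordinary eigenvalues of a product of positive definite matrices, and then invoke a classical log-majorization theorem.

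First I would establish a spectral characterization of $\widehat{d}(A)$ that interacts well with powers of $A$. Using Williamson's theorem, write $M^T A M = D \oplus D$ with $M$ symplectic. The identity $M^T J M = J$ (together with $J^2 = -I$) implies $J M^{-T} = MJ$, so $JA$ is similar to $J(D \oplus D)$, which in turn has eigenvalues $\pm i\, d_j(A)$. Since $A^{1/2} J A^{1/2}$ is conjugate to $JA$ via $A^{1/2}$ and is real skew-symmetric, its singular values form precisely the decreasing list $\widehat{d}(A)$. Introducing $\widetilde{A} := J^T A J$, which is positive definite because $J$ is orthogonal, this characterization can be rewritten as
\begin{equation*}
\widehat{d}(A)^2 = \lambda^{\downarrow}\bigl(A^{1/2}\, \widetilde{A}\, A^{1/2}\bigr).
\end{equation*}

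Next I would observe that the orthogonality of $J$ gives $\widetilde{A^t} = J^T A^t J = (J^T A J)^t = \widetilde{A}^t$ via functional calculus, so the same identity applied to $A^t$ yields $\widehat{d}(A^t)^2 = \lambda^{\downarrow}(A^{t/2}\, \widetilde{A}^t\, A^{t/2})$. At this point the theorem reduces to
\begin{equation*}
\lambda\bigl(A^{t/2}\, \widetilde{A}^t\, A^{t/2}\bigr) \prec_{\log} \lambda\bigl(A^{1/2}\, \widetilde{A}\, A^{1/2}\bigr)^t \quad (0 \leq t \leq 1),
\end{equation*}
with the reverse inequality for $t \geq 1$. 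This is precisely the Araki--Lieb--Thirring log-majorization applied to the positive definite pair $(A, \widetilde{A})$. Taking componentwise square roots (which preserves log-majorization) then delivers \eqref{eq8} and \eqref{eq9}; the product equality in the definition of $\prec_{\log}$ follows from $\prod \widehat{d}(A) = \det(A)^{1/2}$ together with $\det(A^t) = (\det A)^t$.

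The main obstacle, I expect, is cleanly setting up the dictionary in the first step: identifying the doubled symplectic spectrum $\widehat{d}(A)$ with the singular values of $A^{1/2} J A^{1/2}$ (equivalently with the eigenvalues of $A^{1/2} \widetilde{A} A^{1/2}$). Once this dictionary is in place, the symplectic structure drops out entirely and the statement reduces to a well-known matrix inequality.
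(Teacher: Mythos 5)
Your proof is correct, but it takes a genuinely different route from the paper's. Your dictionary is sound: $A^{1/2}JA^{1/2}$ is real skew-symmetric with eigenvalues $\pm i\,d_j(A)$, so its singular values are exactly the doubled list $\widehat{d}(A)$, and squaring gives $\widehat{d}(A)^2=\lambda^{\downarrow}\bigl(A^{1/2}\widetilde{A}A^{1/2}\bigr)$ with $\widetilde{A}=J^TAJ$; the compatibility $\widetilde{A^t}=\widetilde{A}^{\,t}$ holds because $J$ is orthogonal, and the Araki--Lieb--Thirring log-majorization $\lambda\bigl(X^{t/2}Y^tX^{t/2}\bigr)\prec_{\log}\lambda\bigl(X^{1/2}YX^{1/2}\bigr)^t$ for $0\le t\le 1$ (reversed for $t\ge 1$) then delivers \eqref{eq8} and \eqref{eq9} in one stroke. (One small slip: $\prod_{j=1}^{2n}\widehat{d}_j(A)=\det A$ rather than $(\det A)^{1/2}$; but the total-product equality is automatic from the determinant identity underlying Araki's theorem, so nothing is affected.) The paper proceeds quite differently: it first proves the case $t=2$ of \eqref{eq9} by hand, using $\|XY\|\le\|YX\|$ for normal $XY$ together with antisymmetric tensor powers (in effect re-deriving the special instance of Lieb--Thirring it needs), then routes the general case through the geodesic parametrisation $A\#_tB$, establishing \eqref{eq15} for dyadic $t$ by iterating the midpoint inequality, passing to all $t\in[0,1]$ via the continuity supplied by Theorem~\ref{thm7}, specialising to $B=I$ (since $I\#_tA=A^t$) to obtain \eqref{eq8}, and finally getting \eqref{eq9} by the substitution $t\mapsto 1/t$. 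Your argument is shorter, needs neither the perturbation bound nor any continuity or dyadic-approximation step, and cleanly isolates where the symplectic structure enters; the price is reliance on Araki's theorem as a black box. The paper's longer route is more self-contained and, importantly, produces Theorem~\ref{thm3} on geometric means along the way, which your reduction does not give.
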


\begin{corollary}\label{cor2}
 The symplectic eigenvalues of $A$ have the properties:
\begin{itemize}
 \item[(i)] If $0 \leq t \leq 1,$ then for all $1 \leq k \leq n$
\begin{equation}
 \prod_{j=1}^{k} d_j (A^t) \ge \prod_{j=1}^{k} d_j^t (A).   \label{eq10}
\end{equation}
\item[(ii)] If $t \ge 1,$ then for all $1 \leq k \leq n$
\begin{equation}
\prod_{j=1}^{k} d_j (A^t) \leq \prod_{j=1}^{k} d_j^t (A).    \label{eq11}
\end{equation}
\end{itemize}
\end{corollary}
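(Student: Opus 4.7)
The plan is to derive the corollary directly from Theorem \ref{thm1} by applying the log majorisation relations \eqref{eq8} and \eqref{eq9} to the \emph{tails} of the $\widehat{d}$-vectors rather than their heads, and then taking square roots.

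First I would unpack the indexing. Since $\widehat{d}(A)$ lists the $n$ symplectic eigenvalues of $A$ each twice in decreasing order, the last $2k$ entries of $\widehat{d}(A)$, i.e.\ those at positions $2n-2k+1,\ldots,2n$, are precisely $d_k(A),d_k(A),d_{k-1}(A),d_{k-1}(A),\ldots,d_1(A),d_1(A)$. Consequently
$$\prod_{j=2n-2k+1}^{2n}\widehat{d}_j(A)=\prod_{j=1}^{k}d_j(A)^{2},$$
and analogous identities hold for $A^t$ in place of $A$ and for $\widehat{d\,\,}^t(A)$ in place of $\widehat{d}(A)$.

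Second, I would record the following easy consequence of the definition of log majorisation: if $x\prec_{\log}y$ are $m$-vectors of positive numbers, then from the partial inequality
$\prod_{j=1}^{m-k}x_j^{\downarrow}\le\prod_{j=1}^{m-k}y_j^{\downarrow}$ and the total equality
$\prod_{j=1}^{m}x_j^{\downarrow}=\prod_{j=1}^{m}y_j^{\downarrow}$, we obtain the reversed inequality on the tails,
$$\prod_{j=m-k+1}^{m}x_j^{\downarrow}\ \ge\ \prod_{j=m-k+1}^{m}y_j^{\downarrow},\quad 1\le k\le m.$$

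Third, I would apply this with $m=2n$ to the relation $\widehat{d}(A^t)\prec_{\log}\widehat{d\,\,}^{t}(A)$ of Theorem \ref{thm1}, choosing the tail of length $2k$. Combined with the identities from the first step, this gives
$$\prod_{j=1}^{k}d_j(A^t)^{2}\ \ge\ \prod_{j=1}^{k}d_j^{t}(A)^{2},$$
and taking positive square roots yields \eqref{eq10}. Part (ii) follows by exactly the same argument applied to \eqref{eq9}, which reverses the inequality throughout.

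The argument is essentially bookkeeping; the only point that requires care is that the $d_j(A)$'s are arranged in \emph{increasing} order whereas the $\widehat d_j(A)$'s are arranged in \emph{decreasing} order, so the products $\prod_{j=1}^{k}d_j$ match the \emph{last} $2k$ entries of $\widehat d$, and the log majorisation inequality flips sign in passing from the head to the tail. Once this is kept straight, no additional ideas beyond Theorem \ref{thm1} are needed.
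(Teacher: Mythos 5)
Your proof is correct and is exactly the intended derivation: the paper states Corollary \ref{cor2} without proof as an immediate consequence of Theorem \ref{thm1}, and the implicit argument is precisely your bookkeeping — the products $\prod_{j=1}^k d_j$ of the smallest symplectic eigenvalues correspond to tail products of the decreasingly ordered vectors $\widehat{d}$, and log majorisation reverses on tails because of the total-product equality \eqref{eq5}. Nothing further is needed.
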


Given two $n \times n$ positive definite matrices $A$ and $B,$ their {\it 
geometric mean} $G(A, B),$ also denoted as $A \# B,$ is defined as
\begin{equation}
G (A, B) = A \# B = A^{1/2} (A^{-1/2} B A^{-1/2})^{1/2} A^{1/2}.  \label{eq12}
\end{equation}
This was introduced by Pusz and Woronowicz \cite{pw}, and has been much studied 
in connection with problems in physics, electrical networks, and matrix 
analysis. Recently there has been renewed interest in it because of its 
interpretation as the midpoint of the geodesic joining $A$ and $B$ in the 
Riemannian manifold $\mathbb{P}(n).$ The Riemannian distance between $A$ and 
$B$ is defined as
\begin{equation}
\delta (A,B) = \left ( \sum_{i=1}^{n} \,\log^2 \,\lambda_i \,(A^{-1} B) \right 
)^{1/2},    \label{eq13}
\end{equation}
where $\lambda_i(X),$ $1 \leq i \leq n,$ are the eigenvalues of $X.$ With this 
metric $\mathbb{P}(n)$ is a nonpositively curved space. Any two points $A$ and 
$B$ in $\mathbb{P}(n)$ can be joined by a unique geodesic. A natural 
parametrisation for this geodesic is
\begin{equation}
A \#_t B = A^{1/2} \left (A^{-1/2} B A^{-1/2} \right )^t A^{1/2}, \quad 0 \leq 
t \leq 1.    \label{eq14}
\end{equation}
$G(A,B)$ is evidently the midpoint of this geodesic. Our next theorem links the 
symplectic eigenvalues of $A \#_t B$ with those of $A$ and $B.$

\begin{theorem}\label{thm3}
 Let $A,B$ be any two elements of $\mathbb{P}(2n).$ Then for $0 \leq t \leq 1,$
\begin{equation}
\widehat{d} \left ( A \#_t B \right ) \prec_{\log} \widehat{d\,\,}^{1-t} (A) 
\widehat{d\,\,}^t (B).    \label{eq15}
\end{equation}
In particular
\begin{equation}
\widehat{d} (G(A,B)) \prec_{\log} \left (\widehat{d} (A) \widehat{d} (B) \right 
)^{1/2}.    \label{eq16}
\end{equation}
\end{theorem}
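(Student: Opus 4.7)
The plan is to reduce the symplectic log-majorisation in \eqref{eq15} to an ordinary determinant inequality for restrictions $V^{T}AV$, by combining a max-type variational characterisation of products of top symplectic eigenvalues with Ando's inequality for operator means.

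The key tool I would invoke is an identity of the form
\begin{equation*}
\prod_{j=n-k+1}^{n} d_j(A)^{2} \;=\; \max_{V \in \mathcal{C}_k} \det(V^{T} A V),
\end{equation*}
where $\mathcal{C}_k$ is the compact set of $2n \times 2k$ real ``orthosymplectic'' frames, namely matrices $V$ with $V^{T}V = I_{2k}$ and $V^{T}JV = J_{2k}$. This is exactly the kind of ``variational principle'' the abstract promises, and I would expect it to be established earlier in the paper by combining Williamson's normal form with a Cauchy-type interlacing for symplectic eigenvalues under orthosymplectic restriction. Both $\mathcal{C}_k$ and the functional $V \mapsto \det(V^T X V)$ are manifestly $Sp(2n)$-covariant, and the value at the ``top'' Williamson frame is $\prod_{j=n-k+1}^n d_j(A)^2$, which makes the max-formulation natural.

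Given this, the main computation is short. Ando's inequality for positive linear maps, applied to the congruence $\Phi_V(X) := V^{T}XV$, gives $V^{T}(A \#_t B) V \le (V^{T}AV) \#_t (V^{T}BV)$. Monotonicity of the determinant on the positive cone together with the identity $\det(P \#_t Q) = (\det P)^{1-t}(\det Q)^{t}$ then yields
\begin{equation*}
\det\bigl(V^{T}(A \#_t B)V\bigr) \;\le\; \det(V^{T}AV)^{\,1-t}\,\det(V^{T}BV)^{\,t}.
\end{equation*}
Maximising over $V \in \mathcal{C}_k$, the elementary bound $\max_V f^{1-t}g^{t} \le (\max_V f)^{1-t}(\max_V g)^{t}$ and the variational principle convert this into
\begin{equation*}
\prod_{j=n-k+1}^{n} d_j(A \#_t B)^{2} \;\le\; \Bigl(\prod_{j=n-k+1}^{n} d_j(A)^{2}\Bigr)^{\!1-t}\Bigl(\prod_{j=n-k+1}^{n} d_j(B)^{2}\Bigr)^{\!t}.
\end{equation*}
Taking square roots and translating via the doubling identity $\prod_{j=1}^{2k}\widehat{d}_j(X) = \prod_{j=n-k+1}^{n} d_j(X)^{2}$, this is exactly \eqref{eq15} at every even index $m=2k$. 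Because the coordinatewise product of two positive decreasing sequences is itself decreasing, the vector $\widehat{d\,}^{1-t}(A)\widehat{d\,}^{t}(B)$ is already in decreasing order, so no reordering step is needed. The odd-index inequalities follow from the paired structure $\widehat{d}_{2i-1}=\widehat{d}_{2i}$, which forces $P_{2k+1}^{2}=P_{2k}P_{2k+2}$ on both sides and squeezes the intermediate odd partial product between the two even ones. Equality at $m=2n$ is immediate from $\det(A \#_t B) = (\det A)^{1-t}(\det B)^{t}$; and \eqref{eq16} is the specialisation $t=1/2$.

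The principal obstacle is the variational principle itself. Once the max-characterisation of $\prod_{j=n-k+1}^{n} d_j(A)^{2}$ is available, Ando's operator-mean inequality and determinantal multiplicativity of $\#_t$ combine almost mechanically; so the technical heart of the argument is the Ky-Fan-type variational statement for products of the top $k$ symplectic eigenvalues, which in turn rests on Williamson normal form together with an orthosymplectic interlacing lemma.
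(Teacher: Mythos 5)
Your reduction hinges on the identity $\prod_{j=n-k+1}^{n}d_j(A)^2=\max_{V\in\mathcal{C}_k}\det(V^TAV)$, and this identity is false. Take $n=2$, $k=1$ and $A=\diag(\gamma^2,1,\gamma^{-2},1)$. Then $A^{1/2}JA^{1/2}=J$, so $d_1(A)=d_2(A)=1$ and your left-hand side equals $1$. Now let $v=\tfrac{1}{\sqrt2}(1,1,0,0)^T$ and $V=[v,\,-Jv]$; one checks $V^TV=I_2$ and $V^TJ_4V=J_2$, while
\begin{equation*}
\det(V^TAV)=\Bigl(\tfrac{\gamma^2+1}{2}\Bigr)\Bigl(\tfrac{\gamma^{-2}+1}{2}\Bigr)=\tfrac12+\tfrac{\gamma^2+\gamma^{-2}}{4},
\end{equation*}
which exceeds $1$ and is unbounded as $\gamma\to\infty$. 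The conceptual problem is that $\mathcal{C}_k$ is \emph{not} $Sp(2n)$-covariant: replacing $(A,V)$ by $(M^TAM,M^{-1}V)$ preserves $V^TJV=J_{2k}$ but destroys $V^TV=I$, so your functional is invariant only under the orthosymplectic group, which does not act transitively on Williamson orbits. Relatedly, the interlacing theorem \eqref{eq42} is genuinely one-sided — the upper bound carries an index shift of two per deleted pair — so compressions can inflate the top symplectic eigenvalues arbitrarily, exactly as the example shows. That is why the correct variational principle, Theorem \ref{thm5}(ii), is a \emph{minimum} principle for the \emph{smallest} $k$ symplectic eigenvalues; and it cannot be substituted into your argument, because Ando's inequality $\det\Phi_M(A\#_tB)\le\det\Phi_M(A)^{1-t}\det\Phi_M(B)^t$ bounds the quantity being minimised from above, the wrong direction for the partial products you need.

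Everything downstream of the false lemma (Ando's inequality, multiplicativity of $\det$ under $\#_t$, the even/odd index bookkeeping) is sound, but the proof collapses without it. The paper's route is entirely different: it first proves $\widehat{d\,}^{2}(A)\prec_{\log}\widehat{d}(A^2)$ by operator-norm estimates on $A^{1/2}JA^{1/2}$ and antisymmetric tensor powers, then uses the factorisation $G(A,B)=A^{1/2}UB^{1/2}$ with $U$ orthogonal together with symplectic invariance to get $\widehat{d}_1(G(A,B))\le(\widehat{d}_1(A)\widehat{d}_1(B))^{1/2}$, extends to all indices via $\Lambda^k$, and reaches general $t$ by geodesic bisection ($A\#_{1/4}B=A\#_{1/2}(A\#_{1/2}B)$), dyadic rationals, and the continuity supplied by Theorem \ref{thm7}. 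If you want to salvage a variational approach you would need a max characterisation that is genuinely $Sp(2n)$-invariant (for instance one phrased through $A^{-1}$ via Theorem \ref{thm5}(ii)), but there too the direction of Ando's inequality fails to mesh with the minimisation.
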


Next let $A_1, A_2, \ldots, A_m$ be $m$ points in $\mathbb{P}(n).$ Their 
geomeric mean, variously called the {\it Riemannian mean}, the {\it Cartan 
mean}, the {\it Karcher mean}, the {\it Riemannian barycentre,} is defined as
\begin{equation}
G(A_1, \ldots, A_m) = \argmin \sum_{j=1}^{m} \frac{1}{m} \delta^2 (A_j, X).     
\label{eq17} 
\end{equation}
This object of classical differential geometry has received much attention from 
operator theorists and matrix analysts in the past ten years, and many new 
properties of it have been established. It has also found applications in 
diverse areas such as statistics, machine learning, image processing, 
brain-computer interface, etc. We refer the reader to \cite{rbh1} for a basic 
introduction to this area and to \cite{n} for an update.

A little more generally, a weighted geometric mean of $A_1, \ldots, A_m$ can be 
defined as follows. Given positive numbers $w_1,\dots, w_m$ with $\sum w_j=1,$ 
let
\begin{equation}
G (w, A_1, \ldots, A_m) = \argmin \sum_{j=1}^{m} w_j \delta^2 (A_j, X).   
\label{eq18} 
\end{equation}
The object in \eqref{eq17} is the special case when $w_j = \frac{1}{m}$ for all 
$j.$ In case $m=2,$ we have $(w_1, w_2)=(1-t, t)$ for some $0 \leq t \leq 1,$ 
and then $G(w, A, B)$ reduces to the matrix in \eqref{eq14}. With $t=1/2$ it 
reduces further to \eqref{eq12}.

Our next theorem is a several-variables version of Theorem \ref{thm3}.

\begin{theorem}\label{thm4}
 Let $A_1, \ldots, A_m$ be  elements of $\mathbb{P}(2n)$ and let $w = (w_1, 
\ldots, w_m)$ be a positive vector with $\sum w_j = 1.$ Then 
\begin{equation}
\widehat{d} (G(w, A_1, \ldots, A_m)) \prec_{\log} \prod_{j=1}^{m} 
\widehat{d\,\,}^{w_{j}} (A_j). \label{eq19}
\end{equation}
In particular
\begin{equation}
\widehat{d} (G(A_1, \ldots, A_m)) \prec_{\log}     \left ( 
\prod_{j=1}^{m} \widehat{d} (A_j)\right )^{1/m}       \label{eq20}
\end{equation}
\end{theorem}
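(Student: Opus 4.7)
The plan is to deduce Theorem \ref{thm4} from Theorem \ref{thm3} by a limiting argument that expresses the Karcher mean $G(w, A_1, \ldots, A_m)$ as a limit of iterates built out of two-variable weighted geometric means $\#_t$.

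First, I would dispose of the equality required at the top level of $\prec_{\log}$. The Karcher mean satisfies $\det G(w, A_1, \ldots, A_m) = \prod_{j=1}^m (\det A_j)^{w_j}$, which follows by taking determinants in the Karcher equation $\sum_j w_j \log (G^{-1/2} A_j G^{-1/2}) = 0$. Combined with the identity $\det A = \prod_{i=1}^{2n} \widehat{d}_i(A)$ for $A \in \mathbb{P}(2n)$ (immediate from Williamson's normal form, since any symplectic $M$ has $\det M = 1$), this gives the needed product identity
$$\prod_{i=1}^{2n} \widehat{d}_i\bigl(G(w, A_1, \ldots, A_m)\bigr) = \prod_{i=1}^{2n} \prod_{j=1}^m \widehat{d}_i^{\,w_j}(A_j).$$

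Second, for the partial-product inequalities, I would approximate $G(w, A_1, \ldots, A_m)$ by an iterative scheme involving only two-variable weighted means. A natural choice is the Lim--P\'alfia power mean $P_t(w, A_1, \ldots, A_m)$, characterized as the unique positive solution of $X = \sum_j w_j (X \#_t A_j)$ and known to converge in the Thompson metric to $G(w, A_1, \ldots, A_m)$ as $t \to 0^+$. Alternatively one may use the Holbrook--Sturm deterministic walk $X_k = X_{k-1} \#_{1/k} A_{i_k}$ along a sequence $(i_k)$ in which index $j$ occurs with asymptotic frequency $w_j$. At each step, Theorem \ref{thm3} produces a log majorisation bound for the new iterate in terms of the previous one and one of the $A_j$; composing these bounds yields a log majorisation of the $N$-th iterate by a vector converging to $\prod_{j=1}^m \widehat{d}^{\,w_j}(A_j)$ as $N \to \infty$.

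Third, I would pass to the limit using continuity of the symplectic eigenvalues in $A$ together with the stability of log majorisation under componentwise limits, which is available once the equality of total products established in Step~1 is in hand. The special case $w_j = 1/m$ then yields \eqref{eq20}.

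The main obstacle lies in Step~2: one must choose an iteration whose partial log majorisation bounds, when composed through the chosen sequence of $\#_t$ operations, actually converge with the correct weights to $\prod_j \widehat{d}^{\,w_j}(A_j)$. The Lim--P\'alfia route appears cleanest, because the weights $w_j$ are built explicitly into the defining fixed-point equation, and each application of Theorem \ref{thm3} produces exactly the exponents needed; the Holbrook--Sturm route works too but requires a more delicate frequency-counting argument to align the cumulative exponents $1/k$ of the walk with the target weights $w_j$ in the limit.
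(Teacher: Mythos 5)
Your deterministic-walk branch is essentially the paper's own proof: the argument there invokes the Holbrook/Lim--P\'alfia ``no dice'' iteration $S_{k+1}=S_k\#_{1/(k+1)}A_{\overline{k+1}}$, applies \eqref{eq15} at each step, and passes to the limit using the continuity of symplectic eigenvalues supplied by Theorem \ref{thm7}; for the weighted case the paper likewise proposes either rational weights plus a limit (as in \cite{bk}) or the weighted walk of \cite{lp}. The frequency-counting you worry about is not delicate for the cyclic walk: by induction $\widehat{d}(S_N)\prec_{\log}\prod_{k=1}^{N}\widehat{d\,\,}^{1/N}(A_{\overline{k}})$, since at each step the old exponents are multiplied by $k/(k+1)$ and the new factor enters with exponent $1/(k+1)$, so the exponents telescope exactly and converge to $1/m$ (or to $w_j$ for the weighted walk). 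Your Step 1, deriving the total-product equality from the Karcher equation and $\det M=1$ for symplectic $M$, is correct and a reasonable alternative to observing that the equality at level $2n$ is preserved at every step of the iteration and in the limit.

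However, the route you declare ``cleanest'' --- the power mean $P_t$ defined by the fixed-point equation $X=\sum_j w_j(X\#_t A_j)$ --- has a genuine gap as described. That equation involves an \emph{arithmetic} (convex) combination of the matrices $X\#_t A_j$, and Theorem \ref{thm3} gives no upper control on the symplectic eigenvalues of a sum; the only inequality for sums available in the paper, \eqref{eq46}, is a \emph{superadditivity} and points in the wrong direction for bounding the partial products of the largest $\widehat{d}_j$ from above. Your phrase ``at each step, Theorem \ref{thm3} produces a log majorisation bound for the new iterate in terms of the previous one and one of the $A_j$'' describes a composition of binary $\#_t$ operations, which is what the walk does, not what the power-mean fixed point does; to make the power-mean route work you would need an additional lemma handling convex combinations (e.g.\ a symplectic analogue of the log-majorisation behaviour of $\sum_j w_j B_j$), which is not established here. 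Stick to the deterministic walk; with that choice your proof is the paper's.
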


In the study of eigenvalues of Hermitian matrices, a very important role is 
played by variational principles, such as the Courant-Fischer-Weyl minmax 
principle, Cauchy's interlacing theorem and Ky Fan's  theorems on extremal
characterisations of sums and products of eigenvalues. It will be valuable to 
assemble a similar arsenal of techniques for symplectic eigenvalues. In Section 
4 of this paper we give an exposition of some of these ideas. We provide an 
outline of proofs of a minmax principle and an interlacing theorem (both of 
which are known results). Then we use this to provide a unified simple proof of 
the following theorem. To emphasize the dependence on $n$ we use the notation 
$J_{2n}$ for the $2n \times 2n$ matrix $\left [\begin{array}{cc} O & I \\ -I & 
O \end{array} \right ].$ The minimum in Theorem \ref{thm5} below is taken over 
$2n \times 2k$ matrices $M$ satisfying $M^T J_{2n} M = J_{2k}.$

\begin{theorem}\label{thm5}
 Let $A \in \mathbb{P}(2n).$ Then for all $1 \leq k \leq n$
\begin{eqnarray}
{\rm (i)} & & 2 \sum\limits_{j=1}^{k} d_j (A) = \underset{M:M^{T} J_{2n} 
M= J_{2k}}{\min} \,\tr \, M^T AM, \label{eq21}
\end{eqnarray}
\begin{eqnarray}
{\rm (ii)} & & \prod_{j=1}^{k} \,d_j^2 (A) = \underset{M:M^{T} J_{2n} M= 
J_{2k}}{\min} \, \det \, M^T AM.\label{eq22}
\end{eqnarray}

\end{theorem}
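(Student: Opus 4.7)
The plan is to establish both identities by proving the inequality ``$\leq$'' via an explicit minimizer coming from Williamson's theorem, and the reverse inequality ``$\geq$'' via the symplectic interlacing theorem developed in Section 4.

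For the upper bound I apply Williamson's theorem to obtain $N \in Sp(2n)$ with $N^T A N = \diag(D, D)$, where $D = \diag(d_1(A), \ldots, d_n(A))$. Let $E$ be the $n \times k$ matrix formed by stacking $I_k$ on top of an $(n-k) \times k$ zero block, and set $P = \diag(E, E)$, viewed as a $2n \times 2k$ block-diagonal matrix. Using $E^T E = I_k$, a direct block computation gives $P^T J_{2n} P = J_{2k}$, so $M_0 := NP$ satisfies $M_0^T J_{2n} M_0 = P^T J_{2n} P = J_{2k}$ (since $N$ is symplectic), while $M_0^T A M_0 = P^T \diag(D, D) P = \diag(D_k, D_k)$ with $D_k := \diag(d_1(A), \ldots, d_k(A))$. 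Reading off trace and determinant then shows that both minima are at most the claimed values.

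For the matching lower bound, fix any admissible $M$ and put $B := M^T A M \in \mathbb{P}(2k)$. The symplectic interlacing theorem of Section 4 gives $d_j(A) \leq d_j(B)$ for $1 \leq j \leq k$. Applying Williamson's theorem to $B$ with some $S \in Sp(2k)$ (which has $\det S = 1$) yields $\det B = \det \diag(D_B, D_B) = \prod_{j=1}^{k} d_j(B)^2 \geq \prod_{j=1}^{k} d_j(A)^2$, settling part (ii).

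Part (i) requires the further inequality $\tr B \geq 2 \sum_j d_j(B)$ for every $B \in \mathbb{P}(2k)$, which is the main obstacle. To prove it, write $B = T^T \diag(D_B, D_B) T$ with $T \in Sp(2k)$; the cyclic property of the trace gives $\tr B = \tr(\diag(D_B, D_B) K)$, where $K := T T^T$ is both symplectic and positive definite. Expanding in $k \times k$ blocks, $\tr B = \sum_i d_i(B)\bigl[(K_{11})_{ii} + (K_{22})_{ii}\bigr]$. Any symmetric symplectic $K$ obeys the identity $K^{-1} = -J_{2k} K J_{2k}$ (an immediate consequence of $K J_{2k} K = J_{2k}$), whence the $(1,1)$-block of $K + K^{-1}$ equals $K_{11} + K_{22}$. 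Combining this with the elementary inequality $K + K^{-1} \geq 2 I_{2k}$ (valid for any positive definite $K$) and using that the diagonal block of a positive semidefinite matrix is itself positive semidefinite yields $K_{11} + K_{22} \geq 2 I_k$, so $(K_{11})_{ii} + (K_{22})_{ii} \geq 2$ for every $i$. Hence $\tr B \geq 2 \sum_j d_j(B) \geq 2 \sum_j d_j(A)$, which completes (i).
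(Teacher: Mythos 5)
Your proof is correct, and its overall architecture is the same as the paper's: exhibit an explicit minimizer via Williamson's theorem, and reduce the lower bound for general $k$ to the full-rank case by completing $M$ and interlacing. Where you genuinely diverge is in the proof of the central inequality $\tr B \ge 2\sum_j d_j(B)$ for $B\in\mathbb{P}(2k)$. The paper writes the symplectic matrix in block form, uses the relation $AG^T-BC^T=I$ from \eqref{eq24} together with the arithmetic--geometric mean inequality to show that the associated matrix $\widetilde M$ of \eqref{eq25} has all row sums at least $1$ (the inequalities \eqref{eq26}); unwinding your notation, $(K_{11})_{ii}+(K_{22})_{ii}$ with $K=TT^T$ is exactly twice the $i$th row sum of $\widetilde T$, so you are proving the same statement, but via the identity $K^{-1}=-JKJ$ for symmetric symplectic $K$ and the operator inequality $K+K^{-1}\ge 2I$. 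Your route is more coordinate-free and arguably cleaner; the paper's route yields the entrywise inequalities \eqref{eq26} as a by-product, which it later strengthens to the double superstochasticity of $\widetilde M$ (Theorem \ref{thm6}). One step you should make explicit: the interlacing theorem of Section 4, as stated, compares $A$ only with its $s$-principal submatrices, so to conclude $d_j(M^TAM)\ge d_j(A)$ you must first complete the $2n\times 2k$ matrix $M$ to a full symplectic matrix $L\in Sp(2n)$ (possible by extending the columns of $M$ to a symplectic basis), observe that $M^TAM$ is an $s$-principal submatrix of $L^TAL$, which has the same symplectic eigenvalues as $A$, and then apply the interlacing inequality $n-k$ times. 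This is precisely the step the paper carries out and your write-up elides; with it supplied, the argument is complete.
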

Part (i) of this theorem has been proved by Hiroshima \cite{h}, and was an 
inspiration for our work. Our proof might be simpler and more conceptual. An 
interesting property of symplectic matrices crops up as a byproduct of our 
analysis.

Every element $M$ of $Sp(2n)$ has a block decomposition
\begin{equation}
 M = \left [ \begin{array}{cc} A & B \\ C & G \end{array} \right ], \label{eq23}
\end{equation}
in which $A,B,C,G$ are $n \times n$ matrices satisfying the conditions
\begin{equation}
AG^T - BC^T = I, \,\,AB^T-BA^T=0, \,\,CG^T-GC^T=0.   \label{eq24}
\end{equation}
We associate with $M$ an $n \times n$ matrix $\widetilde{M}$ whose entries are 
given by
\begin{equation}
\widetilde{m}_{ij} = \frac{1}{2} \left (a_{ij}^2 + b_{ij}^2 + c_{ij}^2 + 
g_{ij}^2 \right ). \label{eq25} 
\end{equation}
 This matrix has some nice properties and can be put to good use in the study 
of symplectic matrices. In the course of our proof of Theorem \ref{thm5} we 
will see that for every $M \in Sp (2n)$ the matrix $\widetilde{M}$ has the 
properties
\begin{eqnarray}
 \sum_{j=1}^{n} \widetilde{m}_{ij} \ge 1, && 1 \leq i \leq n, \quad \mbox{and} 
\nonumber\\
\sum_{i=1}^{n} \widetilde{m}_{ij} \ge 1, && 1 \leq j \leq n.  \label{eq26}
\end{eqnarray}
It turns out that more is true.

An $n \times n$ matrix $A$ is said to be {\it doubly stochastic} if $a_{ij} \ge 
0$ for all $i,j,$ 
$$\sum_{j=1}^{n} a_{ij} = 1 \quad \mbox{for all} \quad 1 \leq i \leq n$$
and 
$$\sum_{i=1}^{n} a_{ij} = 1 \quad \mbox{for all} \quad 1 \leq j \leq n.$$
A matrix $B$ with nonnegative entries is called {\it doubly superstochastic} if 
there exists a doubly stochastic matrix $A$ such that $b_{ij} \ge a_{ij}$ for 
all $i,j.$ Our next theorem shows that $\widetilde{M}$ is a doubly 
superstochastic matrix.

\begin{theorem}\label{thm6}
 Let $M \in Sp (2n),$ and let $\widetilde{M}$ be the $n \times n$ matrix 
associated with $M$ according to the rule \eqref{eq25}. Then $\widetilde{M}$ is 
doubly superstochastic. Further $\widetilde{M}$ is doubly stochastic if and 
only if $M$ is orthogonal. 
\end{theorem}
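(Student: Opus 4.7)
The plan is to prove Theorem~\ref{thm6} in three stages: the row- and column-sum bounds \eqref{eq26} (already noted in the discussion preceding Theorem~\ref{thm5}), the full Hall-type subset inequality that gives superstochasticity, and the equality case. For the first stage I would read off the $(i,i)$ entry of the block identity $AG^T - BC^T = I$ coming from $MJM^T = J$, namely $\sum_k (a_{ik}g_{ik} - b_{ik}c_{ik}) = 1$. Combining the pointwise AM--GM inequalities $\tfrac12(a_{ik}^2 + g_{ik}^2) \geq a_{ik}g_{ik}$ and $\tfrac12(b_{ik}^2 + c_{ik}^2) \geq -b_{ik}c_{ik}$ (both $(x\mp y)^2\geq 0$) and summing over $k$ gives $\sum_k \widetilde m_{ik} \geq 1$; the column inequality follows the same way from the dual identity $A^TG - C^TB = I$.

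For the second stage, I would invoke the max-flow--min-cut (equivalently Hall-type) characterisation of doubly superstochastic matrices: a nonnegative $n\times n$ matrix $W$ dominates a doubly stochastic matrix iff
\[
\sum_{i\in K,\,j\in L} w_{ij} \;\geq\; |K|+|L|-n\qquad \text{for every }K,L\subseteq\{1,\ldots,n\}.
\]
This is vacuous when $|K|+|L|\leq n$, so only the non-trivial range matters. The plan is to reorganise $M$ into the interleaved $2\times 2$ block basis, writing $M\simeq(D_{ij})_{i,j=1}^n$ with $D_{ij}=\bigl(\begin{smallmatrix}a_{ij} & b_{ij}\\c_{ij} & g_{ij}\end{smallmatrix}\bigr)$ so that $\widetilde m_{ij}=\tfrac12\|D_{ij}\|_F^2$. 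The symplectic relation then reads $\sum_k D_{ik}J_2 D_{i'k}^T = \delta_{ii'}J_2$ with $J_2=\bigl(\begin{smallmatrix}0 & 1\\-1 & 0\end{smallmatrix}\bigr)$. The diagonal pieces $i=i'$ give $\sum_k \det D_{ik} = 1$ (and $\sum_k \det D_{kj}=1$ dually), which together with the sharper AM--GM $\widetilde m_{ij}\geq|\det D_{ij}|$ recovers Stage~1 cleanly. The subset inequality for $|K|+|L|>n$ must additionally use the off-diagonal relations $\sum_k D_{ik}J_2 D_{jk}^T = 0$ for $i\neq j$ to control the signed cancellations that the simple AM--GM loses.

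The third stage follows by tracking equalities. If $\widetilde M$ is doubly stochastic, every row sum is exactly $1$, forcing equality in both AM--GM bounds for every $(i,k)$: $a_{ik}=g_{ik}$ and $b_{ik}=-c_{ik}$. Thus $G=A$ and $C=-B$, so $M = \bigl(\begin{smallmatrix}A & B\\-B & A\end{smallmatrix}\bigr)$. Substituting into $M^TJM=J$ gives $A^TA+B^TB=I$ and $A^TB=B^TA$, which together amount to $M^TM=I$, i.e.\ $M$ is orthogonal. Conversely, every symplectic orthogonal $M$ has this block form, with $AA^T+BB^T=I$ as the upper-left block of $MM^T=I$; then $\sum_j \widetilde m_{ij} = (AA^T+BB^T)_{ii}=1$ for every $i$ and similarly for columns, so $\widetilde M$ is doubly stochastic. \emph{Main obstacle.} Stage~2 is the decisive step: the crude bound $\widetilde m_{ij}\geq|\det D_{ij}|$ is insufficient, because the signed doubly-stochastic matrix $E_{ij}:=\det D_{ij}$ need not have $|E|$ dominate any doubly stochastic matrix entrywise; the off-diagonal symplectic relations must be exploited to bridge the gap.
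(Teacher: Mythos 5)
Your Stage~1 (the line-sum bounds \eqref{eq26}) and Stage~3 (the equality case) are correct. Indeed, your ``only if'' argument in Stage~3 --- forcing termwise equality in the AM--GM bounds from $\sum_k\widetilde m_{ik}=1$, deducing $G=A$, $C=-B$, and then checking $M^TM=I$ from the symplectic relations --- is a pleasant elementary alternative to the paper, which handles that direction through the Euler decomposition instead.

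Stage~2, however, which is the actual content of the superstochasticity claim, is a genuine gap, and you flag it yourself. The Hall-type criterion you quote is the right target, but the line-sum bounds of Stage~1 come nowhere near it: a nonnegative matrix with all row and column sums at least $1$ need not dominate any doubly stochastic matrix. For instance, with $n=3$, take $w_{13}=w_{23}=w_{31}=w_{32}=1$ and all other entries $0$; every line sum is at least $1$, yet for $K=L=\{1,2\}$ one has $\sum_{i\in K,\,j\in L}w_{ij}=0<|K|+|L|-n=1$. So additional input from the symplectic structure is indispensable, and your proposal states that the off-diagonal relations $\sum_k D_{ik}J_2D_{jk}^T=0$ ``must be exploited'' without exploiting them. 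The paper closes this gap by a different route entirely: it uses the Euler decomposition \eqref{eq49}, $M=O_1\left[\begin{smallmatrix}\Gamma & O\\ O&\Gamma^{-1}\end{smallmatrix}\right]O_2^T$ with $O_1,O_2$ orthogonal symplectic, passes to the associated unitaries $U,V$, and derives the exact identity \eqref{eq62}, $\widetilde m_{ij}=\bigl|(U\Delta V^T)_{ij}\bigr|^2+\bigl|(U\Sigma V^{\ast})_{ij}\bigr|^2$ with $\Sigma=\frac{1}{2}(\Gamma+\Gamma^{-1})$, $\Delta=\frac{1}{2}(\Gamma-\Gamma^{-1})$. Dropping the first term gives $\widetilde m_{ij}\ge|r_{ij}|^2$ for $R=U\Sigma V^{\ast}$, whose smallest singular value is $\sigma_n=\frac{1}{2}(\gamma_n+\gamma_n^{-1})\ge 1$, and the Elsner--Friedland theorem \eqref{eq48} then supplies a doubly stochastic $P$ with $|r_{ij}|^2\ge\sigma_n^2p_{ij}\ge p_{ij}$. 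Unless you can prove the subset inequalities directly from the block relations --- precisely the step you leave open --- your argument does not establish the theorem.
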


Doubly stochastic, superstochastic and substochastic matrices play an important 
role in the theory of inequalities; see the monograph \cite{moa}. Theorem 
\ref{thm6} is thus likely to be very useful in deriving inequalities for 
symplectic matrices.

For the usual eigenvalues of Hermitian matrices there are several perturbation 
bounds available. See \cite{rbh}. Our next theorem gives such inequalities for 
symplectic eigenvalues. The continuity implied by these bounds will be used in 
our proofs of Theorems \ref{thm1}, \ref{thm3}, \ref{thm4}. But they are of 
independent interest.

We use the symbol $\vertiii{\cdot}$ to denote any {\it unitarily invariant 
norm} on the space of matrices \cite{rbh}. Particular examples are the {\it 
operator norm}
\begin{equation}
\|A\| = \lambda_1 (A^T A)^{1/2} = \underset{\|x\|=1}{\sup}\,\|Ax\|,    
\label{eq27}
\end{equation}
and the {\it Frobenius norm}
\begin{equation}
\|A\|_2 = \left ( \tr \,A^T A \right )^{1/2} = \left ( \sum |a_{ij}|^2 \right 
)^{1/2}.     \label{eq28} 
\end{equation}
Here $\lambda_1$ stands for the maximum eigenvalue.

\begin{theorem}\label{thm7}
Let $A,B$ be two elements of $\mathbb{P}(2n),$ and let $\widehat{D}(A),$ 
$\widehat{D}(B)$ be the diagonal matrices whose diagonals are $\widehat{d}(A)$ 
and $\widehat{d}(B).$ Then for every unitarily invariant norm we have
\begin{equation}
\vertiii{\widehat{D}(A)-\widehat{D}(B)} \leq \left ( \|A\|^{1/2} + \|B\|^{1/2} 
\right ) \vertiii{|A-B|^{1/2}}. \label{eq29}
\end{equation}
The special cases of the operator norm and the Frobenius norm give
\begin{eqnarray}
\underset{1 \leq j \leq n}{\max} |d_j (A) - d_j (B)| &\!\!\!\leq\!\!\!& \left ( 
\|A\|^{1/2} 
+ \|B\|^{1/2} \right ) \|A-B\|^{1/2}, \label{eq30}\\
\sqrt{2} \left (\sum_{j=1}^{n} |d_j(A) - d_j(B)|^2 \right )^{1/2} 
&\!\!\!\leq\!\!\!& \left 
(\|A\|^{1/2} + \|B\|^{1/2} \right ) \left (\tr \,|A-B|\right 
)^{1/2}.   \label{eq31}
\end{eqnarray}
(Here $|X|$ denotes the matrix absolute value defined as $|X| = (X^T X)^{1/2}.$)
\end{theorem}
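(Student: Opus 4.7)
\noindent\textbf{Proof plan for Theorem~\ref{thm7}.}
My plan is to realise $\widehat{d}(A)$ as the ordered vector of singular values of the real antisymmetric matrix $S_A := A^{1/2} J A^{1/2}$, and then to invoke two classical results valid for every unitarily invariant norm: Mirsky's theorem on singular values of a difference, and the Powers--St\o rmer type bound $\vertiii{A^{1/2} - B^{1/2}} \leq \vertiii{|A-B|^{1/2}}$ for positive semidefinite $A,B$. Both are recorded in \cite{rbh}.

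The first step is the identification. Since $J^T = -J$, the matrix $S_A$ is real antisymmetric, so its singular values occur in equal pairs. Observing that $S_A = A^{-1/2}(AJ)A^{1/2}$, the spectrum of $S_A$ coincides with that of $AJ$, and hence with that of $JA$. If $M \in Sp(2n)$ puts $A$ into Williamson form $M^T A M = W := \diag(D,D)$, then the identity $JM = M^{-T}J$ (derived from $M^T J M = J$) gives $JA = M(JW)M^{-1}$. Since
\[
JW = \left[\begin{array}{cc} O & D \\ -D & O \end{array}\right]
\]
has spectrum $\{\pm i\,d_j(A) : 1 \leq j \leq n\}$, the singular values of $S_A$ are the numbers $d_1(A), \ldots, d_n(A)$, each with multiplicity two, which rearranged in decreasing order equal $\widehat{d}(A)$.

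Next, Mirsky's theorem yields
\[
\vertiii{\widehat{D}(A) - \widehat{D}(B)} \leq \vertiii{S_A - S_B},
\]
and I would decompose
\[
S_A - S_B = A^{1/2} J (A^{1/2} - B^{1/2}) + (A^{1/2} - B^{1/2}) J B^{1/2}.
\]
Combining this with the unitarily invariant norm inequality $\vertiii{XYZ} \leq \|X\|\,\vertiii{Y}\,\|Z\|$, together with $\|J\|=1$ and $\|A^{1/2}\| = \|A\|^{1/2}$, gives
\[
\vertiii{S_A - S_B} \leq \bigl(\|A\|^{1/2} + \|B\|^{1/2}\bigr)\,\vertiii{A^{1/2} - B^{1/2}}.
\]
Applying the Powers--St\o rmer inequality to the last factor now produces \eqref{eq29}. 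The specialisations \eqref{eq30} and \eqref{eq31} are then bookkeeping: each $d_j(A) - d_j(B)$ appears twice on the diagonal of $\widehat{D}(A) - \widehat{D}(B)$, so the operator norm simply recovers $\max_j |d_j(A) - d_j(B)|$, while the Frobenius norm picks up the factor $\sqrt{2}$.

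The main obstacle I anticipate is Step~1: one has to check carefully that the antisymmetry of $S_A$ supplies exactly the doubling needed to match $\widehat{d}(A)$, so that Mirsky's theorem can be applied with matching orderings. Once that identification is secured, the rest is a routine chain of unitarily invariant norm manipulations using tools already available in \cite{rbh}.
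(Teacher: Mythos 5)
Your proposal is correct and follows essentially the same route as the paper: the same splitting $S_A-S_B=A^{1/2}J(A^{1/2}-B^{1/2})+(A^{1/2}-B^{1/2})JB^{1/2}$, the same norm bound, and the same appeal to $\vertiii{A^{1/2}-B^{1/2}}\leq\vertiii{|A-B|^{1/2}}$. The only cosmetic difference is that you work with the singular values of the skew-symmetric matrix $A^{1/2}JA^{1/2}$ and Mirsky's theorem, whereas the paper works with the eigenvalues of the Hermitian matrix $iA^{1/2}JA^{1/2}$ and Lidskii--Wielandt; these are interchangeable here.
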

The rest of the paper is organised as follows. In Section 2 we give a proof of 
Theorem \ref{thm7} and in Section 3 of Theorems \ref{thm1},\ref{thm3} and 
\ref{thm4}. In Section 4 we prove Theorem \ref{thm5}, and in 
Section 5 we prove Theorem \ref{thm6}. Some other results are proved along the 
way either as prerequisites or as supplements.

Let us recall here two facts about symplectic eigenvalues and associated pairs of eigenvectors. The imaginary numbers $\pm i d_j(A)$, $1\le j\le n$, constitute the set of eigenvalues of the skew-symmetric matrix $A^{1/2}JA^{1/2}$. To each $d_j(A)$ there corresponds a pair of vectors $u_j$, $v_j$ in $\mathbb{R}^{2n}$ such that 
\begin{equation*}
Au_j=d_j(A)Jv_j,\,\, Av_j=-d_j(A)Ju_j.
\end{equation*}
We may normalize these vectors so that the Euclidean inner product $\langle u_j,Jv_j\rangle=1$. Then we call $(u_j,v_j)$ a {\it symplectic eigenvector pair} corresponding to the symplectic eigenvalue $d_j$. Together, these $2n$ vectors constitute a {\it symplectic eigenbasis} for $\mathbb{R}^{2n}$; i.e., 
\begin{equation*}
\langle u_i, Ju_j\rangle=\langle v_i,Jv_j\rangle=0\textrm{ for all }i,j,
\end{equation*}
and
\begin{equation*}
\langle u_i,Jv_j\rangle=\delta_{ij}\textrm{ for all }i,j.
\end{equation*}

\section{Proof of Theorem \ref{thm7}}

A norm $\vertiii{\cdot}$ on $\mathbb{M}(n)$ is called unitarily invariant if 
$\vertiii{UXV} = \vertiii{X},$ for all $X \in \mathbb{M}(n)$ and for all 
unitary 
matrices $U,V.$ If $X,Y,Z$ are any three matrices, then $\vertiii{XYZ} \leq 
\|X\| \vertiii{Y} \|Z\|.$ See Chapter IV of \cite{rbh} for properties of 
such norms.

Let $A$ be a Hermitian matrix and $\Eig^{\downarrow}(A)$ the diagonal matrix 
whose diagonal entries are the decreasingly ordered eigenvalues of $A.$ By the 
famous Lidskii-Wielandt theorem (see (IV.62)) in \cite{rbh}) we have
$$\vertiii{\Eig^{\downarrow}(A) - \Eig^{\downarrow}(B)} \leq \vertiii{A-B}.$$
Now let $A \in \mathbb{P}(2n).$ The symplectic eigenvalues $d_j(A)$ with 
their negatives are the eigenvalues of the Hermitian matrix $i A^{1/2} 
JA^{1/2}.$ So, from the Lidskii-Wielandt theorem we obtain, for any $A,B$ in 
$\mathbb{P}(2n)$
\begin{eqnarray*}
 \vertiii{\widehat{D}(A)-\widehat{D}(B)} &\!\!\!\!\leq\!\!\!\!& 
\vertiii{A^{1/2} JA^{1/2} - B^{1/2} J B^{1/2}} \\
&\!\!\!\!\leq\!\!\!\!& \vertiii{A^{1/2} J A^{1/2} - A^{1/2} J B^{1/2}} + 
\vertiii{A^{1/2} J B^{1/2} - B^{1/2} J B^{1/2}} \\
&=& \vertiii{A^{1/2} J \left ( A^{1/2} - B^{1/2} \right )} + \vertiii{\left 
(A^{1/2} - B^{1/2} \right ) J B^{1/2}}  \\
&\leq& \| A^{1/2} J \| \,\,\vertiii{A^{1/2} - B^{1/2}} + \vertiii{A^{1/2} - 
B^{1/2}} \|JB^{1/2} \|\\
&=& \left ( \|A^{1/2}\| + \|B^{1/2}\| \right ) \vertiii{A^{1/2} - B^{1/2}}. 
\end{eqnarray*}
By theorem X.1.3 in \cite{rbh}
$$\vertiii{A^{1/2} - B^{1/2}} \leq \vertiii{|A-B|^{1/2}}$$
Combining these inequalities we obtain \eqref{eq29}. Using the definitions of 
$\|\cdot\|$ and $\|\cdot\|_2,$ we get \eqref{eq30} and \eqref{eq31} from this.

\vskip0.2in
\noindent{\bf Example} \,\, Let $\gamma$ be a positive number, and let
$$A = \left [\begin{array}{cc} \gamma I & O \\ O & I  \end{array} \right ], 
\quad B = \left [ \begin{array}{cc} I & O \\ O & I \end{array} \right ].$$
 Then
$$\widehat{D}(A) = \left [ \begin{array}{cc} \gamma^{1/2}I & O \\ O & 
\gamma^{1/2} I  \end{array} \right ], \quad  \widehat{D}(B) = \left [ 
\begin{array}{cc} I & O \\ O & I  \end{array} \right ]$$
So, if $\gamma \ge 1,$ then $\|\widehat{D}(A) - \widehat{D} (B) \| = 
\gamma^{1/2}-1,$ and $\|A-B\| = \gamma - 1.$ This shows that for large 
$\gamma,$ both 
$\|\widehat{D}(A)-\widehat{D}(B)\|$ and $\|A-B\|^{1/2}$ are close to 
$\gamma^{1/2}.$ Thus the bound given by Theorem \ref{thm7} has the right order.

\section{Proofs of Theorems \ref{thm1}, \ref{thm3} and \ref{thm4}}
We  first prove the relation \eqref{eq9} in the special case $t=2.$ We use 
that to establish \eqref{eq15}, and then derive \eqref{eq8} from it. From this 
we obtain \eqref{eq9} for all $t \ge 1.$ Finally, we use the relation 
\eqref{eq15} to get 
 Theorem \ref{thm4}.

We use two elementary properties of the operator norm $\|\cdot\|.$ For any 
matrix $X$ we have $\|X\|^2 = \|XX^T\| = \|X^TX\|.$ If $X$ and $Y$ are any two 
matrices such that $XY$ is normal, then $\|XY\| \leq \|YX\|.$ This is so 
because the norm of a normal matrix is equal to its spectral radius, the 
spectral radius of $XY$ and $YX$ are equal, and in general the norm of $X$ is 
bigger than the spectral radius of $X.$

Now let $A \in \mathbb{P}(2n).$ Then $\widehat{d_1} (A)$ is the maximum 
eigenvalue of $i A^{1/2} J A^{1/2}.$ So, using the properties stated above, we 
get
\begin{eqnarray*}
\widehat{d_1}^2 (A) &=& \| A^{1/2} J A^{1/2} \|^2 = \|A^{1/2} J A J^T A^{1/2} 
\| \\
&\leq& \|AJAJ^T\| = \|AJA\| =  \widehat{d_1} (A^2).
\end{eqnarray*}
Apply these same considerations to the $k$th antisymmetric tensor power 
$\Lambda^k A.$ This gives
\begin{eqnarray*}
\prod_{j=1}^{k} \widehat{d_j}^2 (A) &=& \| \Lambda^k \left ( A^{1/2} J A^{1/2} 
\right ) \|^2 \\
&\leq& \| \Lambda^k A \,\,\Lambda^k J \,\,\Lambda^k A \,\,\Lambda^k J^T \|\\
&=& \|\Lambda^k (AJA) \| = \prod_{j=1}^{k} \widehat{d_j} (A^2),
\end{eqnarray*}
for all $1 \leq k \leq 2n.$ When $k=2n,$ the two extreme sides of the last 
inequality are equal to $\det A^2.$ This establishes \eqref{eq9} for the 
special case $t=2.$

Now let $A,B$ be any two elements of $\mathbb{P}(2n)$ and put
$$U = \left ( A^{-1/2} B A^{-1/2} \right )^{1/2} A^{1/2} B^{-1/2}. $$
Then $U^TU=I,$ and so $U$ is orthogonal. From the formula \eqref{eq12} we see 
that
\begin{equation}
 G(A,B) = A^{1/2} U B^{1/2} = B^{1/2} U^T A^{1/2}, \label{eq32}
\end{equation}
where the last equality follows from the fact that $G(A,B) = G (A,B)^T.$

\noindent For brevity put $G = G (A,B).$ By what we have already proved
$$ \widehat{d_1}^2 (G) \leq \widehat{d_1} (G^2) = \|G JG \|.$$
Using \eqref{eq32} we see that
\begin{eqnarray*}
\| GJG \| &=& \|A^{1/2} U B^{1/2} J B^{1/2} U^T A^{1/2} \| \\
&\leq&  \|A U B^{1/2} J B^{1/2} U^T \|  \\
&\leq&\|AU\| \,\,\|B^{1/2} J B^{1/2} U^T\|   \\
&=& \|A\| \,\,\|B^{1/2} J B^{1/2} \| = \|A\| \widehat{d_1} (B).
\end{eqnarray*}
Thus, we have
\begin{equation}
\widehat{d_1}^2 (G) \leq \|A\| \widehat{d_1} (B).    \label{33}
\end{equation}
By the invariance of $G(A,B)$ under congruence transformations, we have for 
every $M \in GL (2n)$
$$M^T G(A,B) M = G (M^T AM, M^T BM). $$
If $M$ is symplectic, then the symplectic eigenvalues of $M^T G(A,B)M$ are the 
same as those of $G(A,B).$ So
\begin{equation}
\widehat{d_1} (G(A,B)) = \widehat{d_1} (G(M^T AM, M^T BM)).   \label{eq34}
\end{equation}
Choose $M \in Sp (2n)$ so that $M^T AM = \left [\begin{array}{cc} D & O \\ O & 
D \end{array} \right ].$ Then $\|M^T AM\| = \|D\| = \widehat{d_1} (A).$ Using 
this fact we obtain from \eqref{33} and \eqref{eq34}
\begin{equation}
\widehat{d_1} (G(A,B)) \leq (\widehat{d_1} (A)\widehat{d_1} (B) )^{1/2}.    
\label{eq35}
\end{equation}
Once again, applying this to $\Lambda^kA$ and $\Lambda^kB$  we obtain the 
$\log$ majorisation \eqref{eq16}.

The equation \eqref{eq14} gives a natural parametrisation of the geodesic 
joining $A$ and $B.$ Hence
$$A \#_{1/4} B = A \#_{1/2} (A \#_{1/2} B). $$
So, from \eqref{eq35} we obtain 
\begin{eqnarray*}
\widehat{d_1} (A \#_{1/4} B) &\leq& \widehat{d_1}^{1/2} 
(A)\,\,\widehat{d_1}^{1/2} (A \#_{1/2} B) \\
&\leq& \widehat{d_1}^{1/2} (A) \,\,\widehat{d_1}^{1/4} (A) 
\,\,\widehat{d_1}^{1/4} (B) \\
&=& \widehat{d_1}^{3/4} (A) \,\,\widehat{d_1}^{1/4} (B).
\end{eqnarray*}
This argument can be repeated to show that
\begin{equation}
\widehat{d_1} (A \#_t B) \leq  \widehat{d_1}^{1-t} (A) \,\,  \widehat{d_1}^t 
(B),   \label{eq36}
\end{equation}
for all dyadic rationals $t$ in $[0,1].$ By the continuity of symplectic 
eigenvalues, this is then true for all $t$ in $[0,1].$ Using antisymmetric 
tensor powers, we obtain \eqref{eq15} from \eqref{eq36}. This completes the 
proof of Theorem \ref{thm3}.

The inequality \eqref{eq8} is a special case of \eqref{eq15}, since $I \#_t A = 
A^t$ for all $0 \leq t \leq 1.$ If $t \ge 1,$ let $s = 1/t.$ Then from 
\eqref{eq8} we have $\widehat{d} (A^s) \prec_{\log} \widehat{d\,\,}^s (A).$ 
Replace 
$A^s$ by $A$ to obtain \eqref{eq9}. This completes the proof of Theorem 
\ref{thm1}.

Now we turn to Theorem \ref{thm4}. It was shown by E. Cartan that the 
minimising problem in \eqref{eq18} has a unique solution, and this is also the 
unique positive definite solution of the equation
\begin{equation}
 \sum_{j=1}^{m} \log \left ( X^{1/2} A_j^{-1} X^{1/2} \right ) = 0. \label{eq37}
\end{equation}
See e.g. \cite{rbh1}, \cite{n}. A direct description of $G$ suitable for some 
operator theoretic problems has been found recently. This describes 
$G(A_1,\ldots,A_m)$ as the limit of a ``walk'' in the Riemannian metric space 
$\mathbb{P}.$ Consider the sequence $S_k$ defined as
\begin{eqnarray*}
 S_1 &=& A_1 \\
S_2 &=& S_1 \#_{1/2} A_2 \\
\vdots && \\
S_{k+1} &=& S_k \#_{1/k+1} A_{\overline{k+1}}, \quad 
\mbox{where}\,\,\overline{k}=k \,(\mbox{mod}\,m).
\end{eqnarray*}
Then it turns out that
\begin{equation}
G(A_1, \ldots, A_m) = \underset{k \rightarrow \infty}{\lim} \,S_k.    
\label{eq38}
\end{equation}
A stochastic version of this was proved in \cite{ll} and some 
simplifications made in \cite{bk}. The statement \eqref{eq38} was first proved in \cite{hol} and then a considerably simpler proof given in \cite{lp}. The 
effectiveness of this formula stems from the fact that it gives $G$ as a limit 
of the binary mean operation $\#$ rather than the solution to an $m$-variable 
minimisation problem as in \eqref{eq17}, or as the solution of an $m$-variable 
nonlinear matrix equation as in \eqref{eq37}.

The majorisation relation \eqref{eq20} can be derived now from \eqref{eq15}. 
First use it to get a majorisation for $\widehat{d} (S_k)$ as  in the 
proof of \eqref{eq15}, and then take the limit as $k \rightarrow \infty.$ The 
proof of the weighted version \eqref{eq19} is a modification of this idea. We 
can proceed either as in \cite{bk}, first proving it for rational weights and 
then taking a limit, or as in \cite{lp} where the definition of $S_k$ is 
modified to include weights.

An element $A$ of $\mathbb{P}(2n)$ is called a {\it Gaussian matrix} (or, more precisely, the covariance matrix corresponding to a Gaussian state) if $A \pm 
\frac{i}{2} J$ is positive definite. Using \eqref{eq1} one can see that this 
condition  is equivalent to saying that $d_1 (A) \ge 1/2.$ Gaussian matrices 
are being intensely studied in the current literature on quantum information. 
Theorems \ref{thm1},  \ref{cor2},  \ref{thm4} have an interesting corollary.

\begin{corollary}\label{cor8}
\begin{itemize} 
\item[]
\item[(i)] Let $A$ be a Gaussian matrix. Then for every $0 \leq 
t \leq 1,$ $A^t$ is Gaussian. 
\item[(ii)] Let $A,B$ be Gaussian matrices. Then every point on the Riemannian 
geodesic $A \#_t B,$ $0 \leq t \leq 1$ is a Gaussian matrix. Thus the set of 
Gaussian matrices is a geodesically convex set in the Riemannian metric space 
$(\mathbb{P}(2n), \delta).$
\item[(iii)] The geometric mean of any $m$-tuple of Gaussian matrices is 
Gaussian.
\end{itemize}
\end{corollary}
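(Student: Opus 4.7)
The plan is to reduce all three assertions to the criterion stated in the excerpt, namely that $A\in\mathbb{P}(2n)$ is Gaussian if and only if $d_1(A)\ge 1/2$, and then extract lower bounds on the smallest symplectic eigenvalue from the log majorisations already proved in Theorems \ref{thm1}, \ref{thm3} and \ref{thm4}.

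The pivotal observation is this elementary consequence of the definition: if $x\prec_{\log}y$ in $\mathbb{R}^m_+$, then combining the inequality \eqref{eq4} at $k=m-1$ with the equality \eqref{eq5} at $k=m$ yields $x_m^{\downarrow}\ge y_m^{\downarrow}$. In other words, log majorisation not only controls the products of the largest entries of $x$ from above but also forces the smallest entry of $x$ to dominate the smallest entry of $y$. Applied to the $2n$-vectors $\widehat{d}(\cdot)$, each log majorisation we already have in hand supplies a lower bound on $d_1$ of the matrix on the left-hand side.

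For (i), Theorem \ref{thm1} gives $\widehat{d}(A^t)\prec_{\log}\widehat{d\,\,}^t(A)$, with both sides already in decreasing order, so the observation above yields $d_1(A^t)\ge d_1(A)^t$. A brief case analysis on whether $d_1(A)\ge 1$ or $1/2\le d_1(A)<1$ shows that $x\ge 1/2$ and $t\in[0,1]$ imply $x^t\ge 1/2$, and hence $d_1(A^t)\ge 1/2$. For (ii), Theorem \ref{thm3} gives $\widehat{d}(A\#_t B)\prec_{\log}\widehat{d\,\,}^{1-t}(A)\widehat{d\,\,}^t(B)$. Since both factors on the right are non-increasing in the index (as $1-t,t\ge 0$), their componentwise product is too, with smallest entry $d_1(A)^{1-t}d_1(B)^t\ge(1/2)^{1-t}(1/2)^t=1/2$, whence $d_1(A\#_t B)\ge 1/2$. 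For (iii), the same pattern applied to Theorem \ref{thm4} gives
\[
 d_1(G(w,A_1,\ldots,A_m))\;\ge\;\prod_{j=1}^{m}d_1(A_j)^{w_j}\;\ge\;(1/2)^{\sum_j w_j}=1/2,
\]
and the unweighted claim is the special case $w_j=1/m$.

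There is no real obstacle: once one notices that the equality at the final index of a log majorisation upgrades the standard upper bounds on the top entries into a lower bound on the bottom entry, the three parts become immediate applications of Theorems \ref{thm1}, \ref{thm3} and \ref{thm4} respectively. The only genuine side computation is the elementary inequality $x^t\ge 1/2$ for $x\ge 1/2$ and $t\in[0,1]$ used in part (i).
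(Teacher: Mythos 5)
Your proof is correct and is essentially the paper's argument: the paper deduces (i) from the inequality $d_1(A^t)\ge d_1^t(A)$ ``imbedded in'' \eqref{eq10} and disposes of (ii) and (iii) ``likewise'' via Theorems \ref{thm3} and \ref{thm4}, and your pivotal observation (dividing the terminal equality \eqref{eq5} by the inequality \eqref{eq4} at $k=m-1$ to bound the smallest entry from below) is exactly the mechanism by which those lower bounds on $d_1$ arise from the log majorisations. You have merely made explicit the step the paper leaves implicit.
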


\begin{proof}
Imbedded in \eqref{eq10} is the inequality $d_1(A^t) \ge d_1^t (A)$ for $0 \leq 
t \leq 1.$ So, the statement (i) follows. Likewise (ii) and (iii) follow from 
Theorems \ref{thm3} and \ref{thm4}.
\end{proof}

\section{Variational principles and a proof of \\Theorem \ref{thm5}}

The Courant-Fischer-Weyl minmax principle is one of the most powerful tools in 
the analysis of eigenvalues of Hermitian matrices. Such a principle is known 
also for symplectic eigenvalues. We state it in a form suitable for us and, for 
the convenience of the reader, indicate its proof. The idea is borrowed from  \cite{hz},p.39.

We denote the usual Euclidean inner product on $\mathbb{R}^m$ or on 
$\mathbb{C}^m$ by $\langle \cdot , \cdot \rangle.$ In the latter case we assume 
that the inner product is conjugate linear in the first variable. Given $A \in 
\mathbb{P}(2n),$ introduce another inner product on $\mathbb{C}^{2n}$ by 
putting 
\begin{equation}
(x,y) = \langle x, Ay \rangle.   \label{eq39}
\end{equation}
Call the resulting inner product space $\mathcal{H}.$ Let $A^{\#} = i A^{-1} 
J.$ Then
$$(x, A^{\#} y ) = i \langle x, Jy \rangle = \left ( A^{\#} x, y \right ). $$
So $A^{\#}$ is a  Hermitian operator on $\mathcal{H}.$ The symplectic 
eigenvalues of $A^{-1}$ arranged in decreasing order are $\frac{1}{d_1(A)} \ge 
\frac{1}{d_2(A)} \ge \cdots \ge \frac{1}{d_n (A)}.$ The (usual) eigenvalues of 
$A^{\#}$ are 
$$\frac{1}{d_1(A)}  \ge \frac{1}{d_2(A)} \ge \cdots \ge \frac{1}{d_n (A)} \ge 
\frac{-1}{d_n (A)} \ge \cdots \ge \frac{-1}{d_1(A)}   $$
So, from the usual minmax principle  (Corollary III.1.2 in \cite{rbh}) applied to 
$A^{\#}$ we get the following.
\vskip0.1in
\noindent{\bf The minmax principle for symplectic eigenvalues.}
\vskip0.1in
 Let $A \in 
\mathbb{P}(2n).$ Then for $1 \leq j \leq n$
\begin{equation}
 \frac{1}{d_j (A)} = \underset{{\mathcal{M} \subset 
\mathbb{C}^{2n}}\atop{\dim \mathcal{M}=j}}{\max} \,\,\,\underset{{x \in 
\mathcal{M}} \atop{\langle x, Ax \rangle = 1}}{\min} \,\,\, \langle x, i J x 
\rangle, \label{eq40}
\end{equation}
and also
\begin{equation}
 \frac{1}{d_j (A)} = \underset{{\mathcal{M} \subset 
\mathbb{C}^{2n}}\atop{\dim \mathcal{M}=2n-j+1}}{\min} \,\,\,\underset{{x \in 
\mathcal{M}} \atop{\langle x, Ax \rangle = 1}}{\max} \,\,\, \langle x, i J x 
\rangle, \label{eq41}
\end{equation}

One of the important corollaries of the minmax principle for Hermitian matrices 
is the interlacing principle for eigenvalues of $A$ and those of a principal 
submatrix. So it is for symplectic eigenvalues: 
\vskip0.1in
\noindent{\bf The interlacing theorem for symplectic eigenvalues.} 
\vskip0.1in
Let $A\in\mathbb{P}(2n)$. Partition $A$ as $A=\begin{bmatrix}A_{ij}\end{bmatrix}$ where each $A_{ij},$ $i,j=1,2$, is an $n\times n$ matrix. A matrix $B\in\mathbb{P}(2n-2)$ is called an {\it s-principal submatrix} of $A$ if $B=\begin{bmatrix}B_{ij}\end{bmatrix}$, and each $B_{ij}$ is an $(n-1)\times (n-1)$ principal submatrix of $A_{ij}$ occupying the same position in $A_{ij}$ for $i,j=1,2$. In other words, $B$ is obtained from $A$ by deleting, for some $1\le i\le n$, the $i$th and $(n+i)$th rows and columns of $A$. Then 
\begin{equation}
d_j (A) \leq d_j(B) \leq d_{j+2}(A), \quad 1 \leq j \leq n-1,  \label{eq42}
\end{equation}
where we adopt the convention that $d_{n+1} (A) = \infty.$

The proof is similar to the one in the classical Hermitian case. See 
\cite{rbh},p.59. This observation has been made in \cite{ktv}.

Now we come to the proof of Theorem \ref{thm5}. We begin with a proof of the 
inequalities \eqref{eq26}.  From the condition 
$AG^T-BC^T=I$ in \eqref{eq24}, we 
have for $1 \leq i \leq n$
\begin{eqnarray*}
1 &=& \sum_{j=1}^{n} (a_{ij} g_{ij} - b_{ij} c_{ij}) \\
&\leq& \sum_{j=1}^{n} \frac{1}{2} \left (a_{ij}^2 + g_{ij}^2 \right ) + 
\sum_{j=1}^{n} \frac{1}{2} \left (b_{ij}^2 + c_{ij}^2 \right )\\
&=& \sum_{j=1}^{n} \widetilde{m}_{ij}.
\end{eqnarray*}
Applying the same argument to $M^T$ we see that the second inequality in 
\eqref{eq26} also holds. Now we can prove Part (i) of Theorem \ref{thm5} in the 
special case $k=n.$ Without loss of generality, we may assume that $A = 
\widehat{D}= \left [\begin{array}{cc}  D & O \\ O & D \end{array} \right ].$ 
Let $M$ be any element of $Sp(2n)$ and decompose it as $M = \left 
[\begin{array}{cc}  P & Q \\ R & S \end{array} \right ]$ 
according to the rules \eqref{eq23} and \eqref{eq24}. Then 
\begin{eqnarray*}
\tr \,\, M^T \widehat{D} M &=& \tr \,\,\left (P^T D P + Q^T DQ + R^T DR+S^T DS 
 \right ) \\
&=& \sum_{i=1}^{n} \,\, d_i(A) \,\sum_{j=1}^{n}  \left (p_{ij}^2 + q_{ij}^2 + 
r_{ij}^2 + s_{ij}^2 \right ) \\
&=& \sum_{i=1}^n \, d_i (A) \sum_{j=1}^{n} \left ( 2 \widetilde{m}_{ij} \right 
) \\
&\ge& 2 \sum_{i=1}^n \, \, d_i (A),
\end{eqnarray*}
using \eqref{eq26}. When $M=I,$ the two extreme sides of this equality are 
equal. Thus
\vskip0.1in
\begin{equation}
\underset{M \in Sp (2n)}{\min} \,\tr \,M^T AM = 2 \sum\limits_{j=1}^{n} 
\,d_j (A). \label{eq43}
\end{equation}
This is the special case of \eqref{eq21} when $k=n.$

Let $M$ be a $2n\times 2k$ matrix satisfying the condition $M^TJ_{2n}M=J_{2k}$. Partition $M$ as $M=\begin{bmatrix}P^\prime & Q^\prime\\
R^\prime & S^\prime\end{bmatrix},$ where each block is an $n\times k$ matrix. Then we can find a $2n\times 2n$ symplectic matrix $L=\begin{bmatrix}P & Q\\
R & S\end{bmatrix}$ in which each block is an $n\times n$ matrix and the first $k$ columns of $P,$ $Q$, $R,$ $S$ are the columns of $P^\prime$, $Q^\prime$, $R^\prime$, $S^\prime$, respectively. The matrix $M^TAM$ is then a $2k\times 2k$ s-principal submatrix of $L^TAL$.

The symplectic eigenvalues of $L^T AL$ are $d_1 (A) \leq d_2 (A) \leq \cdots 
\leq d_n (A).$ Let those of $M^T AM$ be $d_1^{\prime} \leq d_2^{\prime}  \leq 
\cdots \leq d_k^{\prime}.$ By the interlacing principle $d_j^{\prime} \ge 
d_j(A)$ for $1 \leq j \leq k.$

Now we can complete the proof of Theorem \ref{thm5}. First from the special 
case of (i) proved above we can see that
$$\tr \, M^T AM \ge 2 \sum\limits_{j=1}^{k} \,\,d_j^{\prime}.$$
Then from the interlacing principle we see that
\begin{equation}
\tr \,\,M^T AM \ge 2 \sum\limits_{j=1}^{k} \,\,d_j (A). \label{eq44}
\end{equation}
By the same arguments we see that
\begin{equation}
\det \,M^T AM \ge \prod_{j=1}^{k} \,d_j^{\prime \,2} \ge \prod_{j=1}^{k} 
\,d_j^2 (A). \label{eq45}
\end{equation}
There is equality in the inequalities \eqref{eq44} and \eqref{eq45} when 
$M$ is the matrix 
whose columns are the symplectic eigenvectors of $A$ corresponding to $d_1 
(A) , \ldots, d_k(A).$ This proves Theorem \ref{thm5}. $\blacksquare$

An immediate corollary of this theorem is that if $A, B \in \mathbb{P}(2n),$ 
then for all  $1 \leq k \leq n,$ we have
\begin{eqnarray}
\sum_{j=1}^{k} \,\,d_j (A+B) &\ge& \sum_{j=1}^{k} \,d_j (A) + \sum_{j=1}^{k} \, 
d_j(B), \label{eq46} \\
\prod_{j=1}^{k} \,\,d_j^2 (A+B) &\ge& \prod_{j=1}^{k} \,d_j^2 (A) + 
\prod_{j=1}^{k} \,d_j^2(B).\label{eq47}. 
\end{eqnarray}

\section{Proof of Theorem \ref{thm6}}

We use a theorem of Elsner and Friedland \cite{ef}. This says that 
if $R$ is an $n \times n$ matrix with singular values $s_1 (R) \ge \cdots \ge 
s_n (R),$ then there exist doubly stochastic matrices $P$ and $Q$ for which
\begin{equation}
s_n (R)^2 \,\,p_{ij} \leq |r_{ij}|^2 \leq s_1 (R)^2 \,\,q_{ij}    \label{eq48}
\end{equation}
for all $1 \leq i,$ $j \leq n.$

The {\it Euler decomposition theorem} says that every symplectic matrix $M$ can 
be decomposed as
\begin{equation}
 M = O_1 \left [ \begin{array}{cc} \Gamma & O \\ O & \Gamma^{-1} \end{array} 
\right ] \,O_2^T,  \label{eq49}
\end{equation}
where $O_1$ and $O_2$ are orthogonal and symplectic, and $\Gamma = \diag 
(\gamma_1, \ldots, \gamma_n)$ with
\begin{equation}
\gamma_1 \ge \gamma_2 \ge \cdots \ge \gamma_n \ge 1.   \label{eq50}
\end{equation}
There is a correspondence between $2n \times 2n$ real orthogonal symplectic 
matrices and $n \times n$ complex unitary matrices that tells us that we can 
find $n \times n$ unitary matrices $U$ and $V$ such that
\begin{equation}
U = X + i Y, \quad V = Z + i W,   \label{eq51}
\end{equation}
where $X, Y, Z, W$ are real, and 
\begin{equation}
O_1 = \left [ \begin{array}{cc} X  & -Y \\ Y & X \end{array} 
\right ], \quad  O_2 = \left [ \begin{array}{cc} Z  & -W \\ W & Z \end{array} 
\right ].   \label{eq52}
\end{equation}
Both the theorems cited above may be found in \cite{dms} or \cite{g}.

Using \eqref{eq23}, \eqref{eq49} and \eqref{eq52} we see that
\begin{eqnarray}
 A &=& X \Gamma Z^T + Y \Gamma^{-1} W^T, \quad B = X \Gamma W^T - Y \Gamma^{-1} 
Z^T, \nonumber \\
C & =& Y \Gamma Z^T - X \Gamma^{-1} W^T, \quad G = Y \Gamma W^T + X \Gamma^{-1} 
Z^T. \label{eq53}
\end{eqnarray}
From \eqref{eq51} we have
\begin{eqnarray}
 X &=& \frac{1}{2} \left ( U + \overline{U} \right ), \quad Y = \frac{1}{2i} 
\left ( U - \overline{U} \right ),  \nonumber\\
Z &=& \frac{1}{2} \left ( V + \overline{V} \right ), \quad W = \frac{1}{2i} 
\left ( V - \overline{V} \right ). \label{eq54}
\end{eqnarray}
Here $\overline{U}$ stands for the entrywise complex conjugate of $U.$ We will 
use the notation $U^{\ast}$ for $\overline{U}^T.$ Let
\begin{equation}
\Sigma = \frac{1}{2} (\Gamma + \Gamma^{-1} ), \quad \Delta = \frac{1}{2} 
(\Gamma - \Gamma^{-1})     \label{eq55}                                         
\end{equation}
Both are positive diagonal matrices. Let 
\begin{equation}
 \Sigma = \diag (\sigma_1, \ldots, \sigma_n), \quad \Delta = \diag 
(\delta_1, \ldots, \delta_n). \label{eq56}
\end{equation}
From the first equation in \eqref{eq53}, and the equations \eqref{eq54} and 
\eqref{eq55} we see after a little calculation that
\begin{equation}
A = \frac{1}{2} \left (U \Delta V^T + U \Sigma V^{\ast} + \overline{U} 
\Sigma V^T + \overline{U} \Delta V^{\ast} \right ). \label{eq57}
\end{equation}
Another calculation involving the entries of the matrices in \eqref{eq57} shows 
that
\begin{equation}
 a_{ij} = \sum_{k=1}^{n} \delta_k \Re (u_{ik} v_{jk}) + \sum_{k=1}^{n} \sigma_k 
\Re (u_{ik} \overline{v}_{jk}). \label{eq58}
\end{equation}
Similar calculations with the other three equations in \eqref{eq53} show that

\begin{eqnarray}
b_{ij} &=& \sum_{k=1}^{n} \delta_k \,\,\Im \, (u_{ik} v_{jk}) + \sum_{k=1}^{n} 
\sigma_k \,\Im (u_{ik} \overline{v}_{jk}), \label{eq59} \\
c_{ij} &=& \sum_{k=1}^{n} \delta_k \,\,\Im \, (u_{ik} v_{jk}) - \sum_{k=1}^{n} 
\sigma_k \,\Im (u_{ik} \overline{v}_{jk}), \label{eq60} \\
g_{ij} &=& -\sum_{k=1}^{n} \delta_k \,\,\Re \, (u_{ik} v_{jk}) + \sum_{k=1}^{n} 
\sigma_k \,\Re (u_{ik} \overline{v}_{jk}). \label{eq61} 
\end{eqnarray}
Squaring the equations \eqref{eq58}-\eqref{eq61}, adding them and simplifying 
the resulting expression, we see that
\begin{equation}
\frac{1}{2} \left (a_{ij}^2 + b_{ij}^2 + c_{ij}^2 + g_{ij}^2 \right 
) = \left |\sum_{k=1}^{n} \delta_k u_{ik} v_{jk} \right |^2 +  
\left |\sum_{k=1}^{n} \sigma_k u_{ik} \overline{v}_{jk} \right |^2. \label{eq62}
\end{equation}
This shows that
\begin{equation}
\widetilde{m}_{ij} \ge \left |\sum_{k=1}^{n} \sigma_k u_{ik} \overline{v}_{jk} 
\right |^2. \label{eq63}
\end{equation}
Now let $R=U \Sigma V^{\ast}.$ Then the right-hand side of \eqref{eq63} is 
equal to $\left |r_{ij}\right |^2.$ From \eqref{eq55} and \eqref{eq56} we see 
that 
the smallest singular value of $R$ is $\sigma_n = \frac{1}{2} \left ( \gamma_n 
+ \gamma_n^{-1} \right ).$ So, from \eqref{eq48} we see that there exists a 
doubly stochastic matrix $P$ such that
\begin{equation}
 \widetilde{m}_{ij} \ge \left | r_{ij} \right |^2 \ge \sigma_n^2 p_{ij}. 
\label{eq64}
\end{equation}
Since $\frac{1}{2} (x+x^{-1}) \ge 1$ for any positive number $x,$ we have 
$\sigma_n \ge 1.$ So, it follows from \eqref{eq64} that $\widetilde{M}$ is 
doubly superstochastic. This proves the first statement of Theorem \ref{thm6}.

Now suppose $M$ is symplectic and orthogonal. We have noted earlier that then 
there exists a complex unitary matrix $U = X + i Y$ such that
$$M = \left [\begin{array}{cc} X & - Y \\ Y & X \end{array} \right ]. $$
It is clear from this that the matrix $\widetilde{M}$ associated with this via 
\eqref{eq20} is doubly stochastic.

To prove the converse, return to the relation \eqref{eq62}. We have already 
seen that if the second term on the right-hand side is equal to $\left | r_{ij} 
\right |^2,$ then the matrix $R$ dominates entrywise a doubly stochastic matrix 
$P.$ So, a necessary condition for $\widetilde{M}$ to be doubly stochastic is 
that
$$\left |\sum_{k=1}^{n} \delta_k u_{ik} v_{jk}  \right |^2 = 0 \quad \mbox{for 
all}\quad i,j.$$
Translated to matrices, this says that $U \Delta V^T = 0.$ By the definition of 
$\Delta$ in \eqref{eq56}, this is equivalent to the condition $\gamma_j - 
\gamma_j^{-1} = 0$ for $1 \leq j \leq n;$ or in other words $\gamma_j = 1$ for 
$1 \leq j \leq n.$ In turn, this means that $M$ is orthogonal. The proof of 
Theorem \ref{thm6} is complete. \hfill{$\blacksquare$}

For the theory of majorisation and the role of doubly superstochastic matrices 
in it we refer the reader to the comprehensive treatise \cite{moa}. 

Let $x = (x_1, \ldots, x_n)$ be any element of $\mathbb{R}^n$ and let 
$x^{\uparrow} = (x_1^{\uparrow}, \ldots, x_n^{\uparrow}$) be the vector 
obtained from $x$ by rearranging its coordinates in increasing order
$$x_1^{\uparrow} \leq x_2^{\uparrow} \leq \cdots \leq  x_n^{\uparrow}.   $$ 
We say $x$ is {\it supermajorised} by $y,$ in symbols $x \prec^w y,$ if for $1 
\leq k \leq n$ 
\begin{equation}
 \sum_{j=1}^{k} \, x_j^{\uparrow} \ge \sum_{j=1}^{k} \,y_j^{\uparrow}. 
\label{eq65}
\end{equation}
A fundamental theorem in the theory of  majorisation says that the following 
two conditions are equivalent:
\begin{itemize}
\item[(i)] An $n \times n$ matrix $A$ is doubly superstochastic.
\item[(ii)] $Ax \prec^{w} x$ for every positive $n$-vector $x.$
\end{itemize}
Inequalities like \eqref{eq46} express a supermajorisation. An alternative 
proof of Theorem \ref{thm5}(i) can be obtained using Theorem \ref{thm6}.

\section{Some remarks}
Let $m_1,m_2,\ldots,m_k$ be positive integers, and let $n=m_1+m_2+\cdots +m_k$. If $A_j,$ $1\le j\le k$, are $m_j\times m_j$ matrices, we write $\oplus A_j$ for their usual direct sum. This is the $n\times n$ block-diagonal matrix with entries $A_1,\ldots, A_k$ on its diagonal and zeros elsewhere. Given an $n\times n$ matrix $A$ partitioned into blocks as $A=\begin{bmatrix}A_{ij}\end{bmatrix}$, where the diagonal blocks $A_{jj}$ are $m_j\times m_j$ in size, the {\it pinching} of $A$ is the block diagonal matrix $\oplus A_{jj}$. This is denoted by $\mathcal{C}(A)$. We introduce a version of direct sum and pinching adapted to the symplectic setting. Let 
\begin{equation*}
A_j=\begin{bmatrix}P_j & Q_j\\
R_j & S_j\end{bmatrix},\ 1\le j\le k
\end{equation*}
be $2m_j\times 2m_j$ matrices partitioned into blocks of size $m_j\times m_j$. The {\it $s$-direct sum} of $A_j$ is defined to be the $2n\times 2n$ matrix
\begin{equation*}
\oplus^s A_j=\begin{bmatrix}\oplus P_j & \oplus Q_j\\
\oplus R_j & \oplus S_j\end{bmatrix}.
\end{equation*}
Then, one can see that $\oplus^s J_{2m_j}=J_{2m}$, the $s$-direct sum of symplectic matrices is symplectic, and the $s$-direct sum of positive definite matrices is positive definite. If $A$ is a $2n\times 2n$ and $B$ a $2m\times 2m$ positive definite matrix, then the symplectic eigenvalues of their $s$-direct sum are the symplectic eigenvalues of $A$ and $B$ put together. Let $\mathcal{C}$ be a pinching on $n\times n$ matrices. Then we define the {\it $s$-pinching} of a $2n\times 2n$ matrix $A=\begin{bmatrix}P & Q\\
R & S\end{bmatrix}$ as 
\begin{equation*}
\mathcal{C}^s (A)=\begin{bmatrix}\mathcal{C}(P) & \mathcal{C}(Q)\\
\mathcal{C}(R) & \mathcal{C}(S)\end{bmatrix}.
\end{equation*}
If $A$ is positive definite, then so is $\mathcal{C}^s(A)$. Our next theorem gives a majorisation relation between the symplectic eigenvalues of $A$ and those of $\mathcal{C}^s(A)$. 
\begin{theorem}\label{thma}
Let $A$ be any element of $\mathbb{P}(2n)$ and let $\mathcal{C}^s(A)$ be an $s$-pinching of $A$. Then 
\begin{equation}
\hat{d}(\mathcal{C}^s(A))\prec^w \hat{d}(A).\label{eqa}
\end{equation}
\end{theorem}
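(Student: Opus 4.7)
The plan is to realise the $s$-pinching as an average of congruences by symplectic matrices and then invoke the concavity of $X\mapsto 2\sum_{j=1}^{l}d_{j}(X)$ that comes from Theorem \ref{thm5}(i).

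\textbf{Step 1 (Key representation).} For each sign vector $\epsilon=(\epsilon_{1},\dots,\epsilon_{k})\in\{\pm 1\}^{k}$, set $D_{\epsilon}=\bigoplus_{j=1}^{k}\epsilon_{j}I_{m_{j}}$ and
\begin{equation*}
U_{\epsilon}=\begin{bmatrix}D_{\epsilon} & 0\\ 0 & D_{\epsilon}\end{bmatrix}.
\end{equation*}
A one-line block computation gives $U_{\epsilon}^{T}J_{2n}U_{\epsilon}=J_{2n}$, so $U_{\epsilon}\in Sp(2n)$ (and is also orthogonal). Writing $A=\begin{bmatrix}P & Q\\ R & S\end{bmatrix}$, we have $U_{\epsilon}^{T}AU_{\epsilon}=\begin{bmatrix}D_{\epsilon}PD_{\epsilon} & D_{\epsilon}QD_{\epsilon}\\ D_{\epsilon}RD_{\epsilon} & D_{\epsilon}SD_{\epsilon}\end{bmatrix}$. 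Averaging over $\epsilon$ and using the classical identity $\mathcal{C}(X)=2^{-k}\sum_{\epsilon}D_{\epsilon}XD_{\epsilon}$ applied to each block, I obtain
\begin{equation*}
\mathcal{C}^{s}(A)=\frac{1}{2^{k}}\sum_{\epsilon\in\{\pm 1\}^{k}}U_{\epsilon}^{T}AU_{\epsilon}.
\end{equation*}

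\textbf{Step 2 (Concavity of partial symplectic sums).} For each $1\le l\le n$, Theorem \ref{thm5}(i) exhibits $X\mapsto 2\sum_{j=1}^{l}d_{j}(X)$ as a pointwise minimum over the set $\{M:M^{T}J_{2n}M=J_{2l}\}$ of the functionals $X\mapsto\tr M^{T}XM$. Each of these is linear in $X$, so the map is concave on $\mathbb{P}(2n)$. Combining this with the invariance $d_{j}(U_{\epsilon}^{T}AU_{\epsilon})=d_{j}(A)$ (since $U_{\epsilon}\in Sp(2n)$), the representation from Step 1 yields
\begin{equation*}
\sum_{j=1}^{l}d_{j}(\mathcal{C}^{s}(A))\ge\frac{1}{2^{k}}\sum_{\epsilon}\sum_{j=1}^{l}d_{j}(U_{\epsilon}^{T}AU_{\epsilon})=\sum_{j=1}^{l}d_{j}(A)\qquad(1\le l\le n).
\end{equation*}

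\textbf{Step 3 (Conclusion in $\hat{d}$ form).} Sorted in increasing order, $\hat{d}(X)^{\uparrow}=(d_{1}(X),d_{1}(X),\dots,d_{n}(X),d_{n}(X))$. Writing $S_{l}(X)=\sum_{j=1}^{l}d_{j}(X)$, the partial sums of $\hat{d}^{\uparrow}$ at even and odd indices are $2S_{l}(X)$ and $S_{l-1}(X)+S_{l}(X)$, respectively. The inequalities of Step 2 give $S_{l}(\mathcal{C}^{s}(A))\ge S_{l}(A)$ for every $l$, whence both $2S_{l}(\mathcal{C}^{s}(A))\ge 2S_{l}(A)$ and $S_{l-1}(\mathcal{C}^{s}(A))+S_{l}(\mathcal{C}^{s}(A))\ge S_{l-1}(A)+S_{l}(A)$, i.e. the supermajorisation \eqref{eqa}.

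The main obstacle is Step 1: finding symplectic (indeed orthogonal symplectic) matrices whose averaged congruences exactly reproduce the block-diagonal $s$-pinching. The block structure of $\mathcal{C}^{s}$, which acts by the \emph{same} pinching on each of the four $n\times n$ blocks of $A$, is precisely what allows a single sign matrix $D_{\epsilon}$ to be doubled up into a symplectic $U_{\epsilon}$; after that, concavity and symplectic invariance do all the work.
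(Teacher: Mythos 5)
Your proof is correct, and it takes a genuinely different route from the one in the paper. The paper reduces to a pinching into two blocks, observes that $\mathcal{C}^s(A)$ is then an $s$-direct sum of matrices $B$ and $C$ whose symplectic eigenvalues together give those of $\mathcal{C}^s(A)$, picks optimal matrices $M_1, M_2$ realizing the minimum in \eqref{eq21} for $B$ and $C$, and assembles them into a single $2n\times 2k$ matrix $M$ with $M^TJ_{2n}M=J_{2k}$ so that $2\sum_{j=1}^k d_j(\mathcal{C}^s(A))=\tr\,M^TAM\ge 2\sum_{j=1}^k d_j(A)$. You instead realize $\mathcal{C}^s$ as the average $2^{-k}\sum_\epsilon U_\epsilon^TAU_\epsilon$ over orthogonal symplectic sign matrices $U_\epsilon=D_\epsilon\oplus D_\epsilon$ (your verification that $U_\epsilon^TJ_{2n}U_\epsilon=J_{2n}$ and that the average reproduces the block-wise pinching is right), and then use that \eqref{eq21} exhibits $X\mapsto 2\sum_{j=1}^l d_j(X)$ as a pointwise minimum of linear functionals, hence a concave, symplectically invariant function. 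Both arguments ultimately rest on Theorem \ref{thm5}(i), but yours packages the extremal principle once and for all as concavity, avoids the need for the induction on two-block pinchings and for the fact that symplectic eigenvalues of an $s$-direct sum are the union of those of the summands, and isolates a reusable structural fact (the $s$-pinching is an average of orthogonal symplectic congruences) that immediately yields \eqref{eqc}-type inequalities for any concave symplectically invariant functional; the paper's construction, on the other hand, is more explicit and adapts directly to the multiplicative statement \eqref{eq22} as well. Your bookkeeping in Step 3 converting $S_l(\mathcal{C}^s(A))\ge S_l(A)$ into the supermajorisation of the doubled vectors $\hat d$ is also correct.
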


\begin{proof}
It is enough to consider the case when $n=m_1+m_2$ and $\mathcal{C}$ is a pinching into two blocks; i.e., 
\begin{equation*}
T=\begin{bmatrix}T_{11} & T_{12}\\
T_{21} & T_{22}\end{bmatrix}\textrm{ and }\mathcal{C}(T)=\begin{bmatrix}T_{11} & O\\
O & T_{22}\end{bmatrix}.
\end{equation*}
The general case can be derived by repeated applications of such pinchings. Partition the $2n\times 2n$ positive definite matrix $A$ as 
\begin{equation*}
A=\begin{bmatrix}P & Q\\
Q^T & R\end{bmatrix}=\begin{bmatrix}P_{11} & P_{12} & & Q_{11} & Q_{12}\\
P_{21} & P_{22} & & Q_{21} & Q_{22}\\
 & & & & \\
Q_{11}^T & Q_{21}^T & & R_{11} & R_{12}\\
Q_{12}^T & Q_{22}^T & & R_{21} & R_{22}\end{bmatrix},
\end{equation*}
where $P_{11}$ and $R_{11}$ are $m_1\times m_1$, and $P_{22}$ and $R_{22}$ are $m_2\times m_2$ matrices with $m_1+m_2=n$. Then 
\begin{equation*}
\mathcal{C}^s(A)=\begin{bmatrix}P_{11} & O & & Q_{11} & O\\
O & P_{22} & & O & Q_{22}\\
 & & & & \\
Q_{11}^T & O & & R_{11} & O\\
O & Q_{22}^T & & O & R_{22}\end{bmatrix}.
\end{equation*}
Evidently, $\mathcal{C}^s(A)$ is the $s$-direct sum of a $2m_1\times 2m_1$ matrix $B$ and a $2m_2\times 2m_2$ matrix $C$ defined as
\begin{equation*}
B=\begin{bmatrix}P_{11} & Q_{11}\\
Q_{11}^T & R_{11}\end{bmatrix}, \ C=\begin{bmatrix}P_{22} & Q_{22}\\
Q_{22}^T & R_{22}\end{bmatrix}.
\end{equation*}
The symplectic eigenvalues of $\mathcal{C}^s(A)$ are the symplectic eigenvalues of $B$ and those of $C$ put together. So, given $1\le k\le n$, there exist $k_1,k_2$ such that $1\le k_1\le m_1$, $1\le k_2\le m_2$, $k_1+k_2=k$ and 
\begin{equation}
\sum\limits_{j=1}^{k}d_j(\mathcal{C}^s(A))=\sum\limits_{j=1}^{k_1}d_j(B)+\sum\limits_{j=1}^{k_2}d_j(C).\label{eqa1}
\end{equation}
Using \eqref{eq21} we can choose a $2m_1\times 2k_1$ matrix $M_1$ and a $2m_2\times 2k_2$ matrix $M_2$ such that
\begin{equation*}
M_1^TJ_{2m_1}M_1=J_{2k_1},\ M_2^TJ_{2m_2}M_2=J_{2k_2},
\end{equation*}
and
\begin{equation}
2\sum\limits_{j=1}^{k_1}d_j(B)=\tr\, M_1^TBM_1,\qquad 2\sum\limits_{j=1}^{k_2}d_j(C)=\tr\, M_2^TCM_2.\label{eqa2}
\end{equation}
Let
\begin{equation*}
M_1=\begin{bmatrix}P_1 & Q_1\\
R_1 & S_1\end{bmatrix}, \ M_2=\begin{bmatrix}P_2 & Q_2\\
R_2 & S_2\end{bmatrix},
\end{equation*}
where $P_1,Q_1,R_1,S_1$ are $m_1\times k_1$ matrices and $P_2,Q_2,R_2,S_2$ are $m_2\times k_2$ matrices, and then let 
\begin{equation*}
M=\begin{bmatrix}P_1 & O & Q_1 & O\\
O & P_2 & O & Q_2\\
R_1 & O & S_1  & O\\
O & R_2 & O & S_2\end{bmatrix}.
\end{equation*}
 Using the relations \eqref{eq24} it can be seen that the $2n\times 2k$ matrix $M$ satisfies the equation
\begin{equation*}
M^TJ_{2n}M=J_{2k}.
\end{equation*}
Further,
\begin{equation}
\tr\, M_1^TBM_1+\tr\, M_2^TCM_2=\tr\, M^TAM.\label{eqa3}
\end{equation}
Combining \eqref{eqa1}, \eqref{eqa2} and \eqref{eqa3} we see that 
\begin{equation*}
2\sum\limits_{j=1}^{k}d_j(\mathcal{C}^s(A))=\tr\, M^TAM.
\end{equation*}
It follows from \eqref{eq21} that 
\begin{equation*}
2\sum\limits_{j=1}^{k}d_j(\mathcal{C}^s(A))\ge 2\sum\limits_{j=1}^{k}d_j(A).
\end{equation*}
This proves \eqref{eqa}.
\end{proof}

Using standard arguments from the theory of majorisation one has the following consequence.

\begin{corollary}\label{corc}
Let $f:\mathbb{R}^n_+\to \mathbb{R}$ be any function that is permutation invariant, concave and monotone increasing. Then
\begin{equation}
f\bigl(d_1(\mathcal{C}(A)), \ldots, d_n(\mathcal{C}(A))\bigr)\ge f\bigl(d_1(A),\ldots,d_n(A)\bigr).\label{eqc}
\end{equation}
\end{corollary}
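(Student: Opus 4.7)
The plan is to derive this corollary from Theorem \ref{thma} together with a standard characterisation of supermajorisation in the theory developed in \cite{moa}. I read the symbol $\mathcal{C}$ in the statement as the $s$-pinching $\mathcal{C}^s$ introduced in the preceding paragraphs, since otherwise the left-hand side $d_j(\mathcal{C}(A))$ would not be defined as symplectic eigenvalues; so Theorem \ref{thma} applies directly.

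First I would pass from the $2n$-vector supermajorisation $\hat{d}(\mathcal{C}^s(A))\prec^w \hat{d}(A)$ given by Theorem \ref{thma} to the corresponding $n$-vector relation for the plain symplectic eigenvalues. Since the entries of $\hat{d}(X)$ are the numbers $d_j(X)$ each counted twice, the increasing rearrangement $\hat{d}^\uparrow(X)$ is the list $(d_1(X),d_1(X),d_2(X),d_2(X),\ldots,d_n(X),d_n(X))$. Summing up to an even index $2k$ therefore yields $\sum_{j=1}^{2k}\hat{d}_j^\uparrow(X) = 2\sum_{j=1}^k d_j(X)$. Specialising the supermajorisation inequality \eqref{eqa} at each even index $2k$ gives
\begin{equation*}
\sum_{j=1}^k d_j(\mathcal{C}^s(A))\ge \sum_{j=1}^k d_j(A)\quad\text{for every }1\le k\le n,
\end{equation*}
which is precisely the statement that the $n$-vector $\bigl(d_1(\mathcal{C}^s(A)),\ldots,d_n(\mathcal{C}^s(A))\bigr)$ supermajorises $\bigl(d_1(A),\ldots,d_n(A)\bigr)$.

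Next I would invoke the classical characterisation of supermajorisation recorded in \cite{moa}: for two vectors $x,y\in\mathbb{R}^n_+$, the relation $x\prec^w y$ is equivalent to the assertion that $f(x)\ge f(y)$ for every permutation invariant, concave, monotone increasing function $f:\mathbb{R}^n_+\to\mathbb{R}$. (This is the supermajorisation counterpart of the Hardy-Littlewood-P\'olya-Schur theorem, and sits naturally next to the fact used earlier that a doubly superstochastic map sends $x$ to a vector supermajorised by $x$.) Applying this equivalence to our pair $x=d(\mathcal{C}^s(A))$ and $y=d(A)$ delivers the inequality \eqref{eqc}.

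The whole argument is essentially a translation: all of the analytic content is absorbed into Theorem \ref{thma}. The only bookkeeping point is to pair up the entries of $\hat{d}$ correctly so that one obtains an $n$-vector supermajorisation from a $2n$-vector one, but this is immediate from the definition of $\hat{d}$. Consequently I do not foresee a real obstacle; if one wished to avoid citing \cite{moa}, one could instead reprove the implication $x\prec^w y\Rightarrow f(x)\ge f(y)$ directly by expressing $x$ as the image of a convex combination of permutations of a vector $\tilde{y}\le y$ (the \textit{$T$-transform} argument in the Schur-concave setting), but this would just reproduce standard material.
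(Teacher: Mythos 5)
Your proposal is correct and is exactly the route the paper intends: the paper gives no explicit proof, saying only that the corollary follows from Theorem \ref{thma} by ``standard arguments from the theory of majorisation,'' and your two steps --- extracting the $n$-vector relation $\sum_{j=1}^{k}d_j(\mathcal{C}^s(A))\ge\sum_{j=1}^{k}d_j(A)$ from the even partial sums of \eqref{eqa} (which is in fact the inequality proved directly at the end of the proof of Theorem \ref{thma}), and then invoking the equivalence of $x\prec^{w}y$ with $f(x)\ge f(y)$ for all symmetric, concave, increasing $f$ --- are precisely those standard arguments. The only nitpick is terminological: since $x\prec^{w}y$ reads ``$x$ is supermajorised by $y$,'' you should say $d(\mathcal{C}^s(A))$ is supermajorised by $d(A)$ rather than that it supermajorises it; the displayed inequalities you wrote are the correct ones.
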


Among functions that satisfy the requirements of Corollary \ref{corc} are
\begin{equation*}
f(x_1,\ldots,x_n)=s_k(x_1,\ldots,x_n),
\end{equation*}
and
\begin{equation*}
f(x_1,\ldots,x_n)=\bigl(s_k(x_1,\ldots,x_n)\bigr)^{1/k},
\end{equation*}
where $s_k$ are the elementary symmetric polynomials, $1\le k\le n$. The functions 
\begin{equation*}
f(x_1,\ldots,x_n)=\sum\limits_{j=1}^{n}\frac{x_j}{1+x_j}, 
\end{equation*}
\begin{equation*}
f(x_1,\ldots,x_n)=\sum\limits_{j=1}^{n}\textrm{log}\, x_j,
\end{equation*}
\begin{equation*}
f(x_1,\ldots,x_n)=\bigl(\frac{1}{n}\sum\limits_{j=1}^{n}x_j^r\bigr)^{1/r},\ r<1,
\end{equation*}
also satisfy the conditions in Corollary \ref{corc}.

Finally, we present some inequalities between the symplectic eigenvalues and the usual 
eigenvalues of $A.$

\begin{theorem}\label{thm11}
Let $A \in \mathbb{P}(2n).$ Let $d_j(A), 1 \leq j \leq n$ be the symplectic 
eigenvalues of $A$ counted as in \eqref{eq2} and $\widehat{d}(A)$ the 
$2n$-tuple defined in \eqref{eq7}. Let $\lambda_1 (A), \lambda_2 (A), \ldots, 
\lambda_{2n} (A)$ be the usual eigenvalues of $A.$ Arranged in decreasing order 
they will be denoted by $\lambda_j^{\downarrow} (A)$ and in increasing order by 
$\lambda_j^{\uparrow} (A).$ Then
\begin{eqnarray}
{\rm (i)} && \widehat{d} (A) \prec_{\log} \,\lambda (A) 
\label{eq66} \\          
{\rm (ii)} && \lambda_j^{\uparrow}(A) \leq d_j(A) 
\leq \lambda_{n+j}^{\uparrow}(A), \quad 1 \leq j \leq n.       \label{eq67}
 \end{eqnarray}
 \end{theorem}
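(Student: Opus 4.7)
My plan is to treat (i) and (ii) separately, using in each case the Hermitian matrix $H=iA^{1/2}JA^{1/2}$, whose spectrum is $\{\pm d_j(A):1\le j\le n\}$ as recalled at the end of Section~1, together with its counterpart $K=iA^{-1/2}JA^{-1/2}$, whose spectrum is therefore $\{\pm 1/d_j(A):1\le j\le n\}$. In particular $\lambda_j^{\downarrow}(K)=1/d_j(A)$ for $1\le j\le n$.

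For (i) I would identify $\widehat{d}(A)$ with the sequence of singular values of the skew-symmetric matrix $A^{1/2}JA^{1/2}$ (each $d_j$ appears twice, since $\pm id_j$ share the same absolute value), and factor this matrix as $A^{1/2}\cdot(JA^{1/2})$. Because $J$ is orthogonal, $(JA^{1/2})^T(JA^{1/2})=A$, so $\sigma_j(A^{1/2})=\sigma_j(JA^{1/2})=\lambda_j^{\downarrow}(A)^{1/2}$. Horn's inequality for singular values of a product then yields
\begin{equation*}
\prod_{j=1}^{k}\widehat{d}_j(A)\le \prod_{j=1}^{k}\sigma_j(A^{1/2})\,\sigma_j(JA^{1/2})=\prod_{j=1}^{k}\lambda_j^{\downarrow}(A),\quad 1\le k\le 2n,
\end{equation*}
and the equality case $k=2n$ is automatic from $\prod\widehat{d}_j(A)=\prod d_j(A)^2=\det A=\prod\lambda_j(A)$. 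This is the log-majorisation \eqref{eq66}.

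For the lower bound in (ii) I would apply the ordinary Courant-Fischer minmax to the Hermitian operator $K$: for $1\le j\le n$, $1/d_j(A)=\lambda_j^{\downarrow}(K)=\max_{\dim W=j}\min_{x\in W\setminus\{0\}}\langle x,Kx\rangle/\|x\|^2$. The bijective substitution $z=A^{-1/2}x$ rewrites this as
\begin{equation*}
\frac{1}{d_j(A)}=\max_{\dim V=j}\min_{z\in V\setminus\{0\}}\frac{\langle z,iJz\rangle}{\langle z,Az\rangle}.
\end{equation*}
For any fixed $j$-dimensional $V$, Courant-Fischer for $A$ produces a $z_0\in V$ with $\langle z_0,Az_0\rangle\ge\lambda_j^{\uparrow}(A)\|z_0\|^2$, while $iJ$ being Hermitian with eigenvalues $\pm 1$ gives $\langle z_0,iJz_0\rangle\le\|z_0\|^2$; dividing the second bound by the first forces the inner minimum, and hence the outer maximum, to be at most $1/\lambda_j^{\uparrow}(A)$. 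This gives $\lambda_j^{\uparrow}(A)\le d_j(A)$.

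The upper bound $d_j(A)\le\lambda_{n+j}^{\uparrow}(A)$ will then follow by applying the lower bound to $A^{-1}$ and re-indexing via $d_j(A^{-1})=1/d_{n-j+1}(A)$ and $\lambda_j^{\uparrow}(A^{-1})=1/\lambda_{2n-j+1}^{\uparrow}(A)$. The main obstacle I expect is the bridge in (ii): the symplectic minmax developed in Section~4 is set in the $A$-inner product on $\mathcal{H}$, whereas the desired comparison with $\lambda_j^{\uparrow}(A)$ calls for the Euclidean Courant-Fischer for $A$, and one must notice that the substitution $z=A^{-1/2}x$ is exactly what reconciles the two formulations on the same test subspace, after which the elementary spectral bound $\|iJ\|=1$ closes the gap in one line.
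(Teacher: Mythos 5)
Your proposal is correct. For part (i) you and the paper are doing essentially the same thing in different packaging: the paper bounds $\|A^{1/2}JA^{1/2}\|$ by $\|AJ\|=\|A\|$ using the fact that $\|XY\|\le\|YX\|$ when $XY$ is normal, and then promotes this to a log-majorisation via antisymmetric tensor powers; your appeal to Horn's inequality $\prod_{j\le k}\sigma_j(XY)\le\prod_{j\le k}\sigma_j(X)\sigma_j(Y)$ applied to $A^{1/2}\cdot(JA^{1/2})$ delivers the same partial-product bounds in one stroke (Horn's inequality is itself proved by the tensor-power device), and your identification of $\widehat{d}(A)$ with the singular values of the normal matrix $A^{1/2}JA^{1/2}$ and the determinant check at $k=2n$ are both sound. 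For part (ii) your route is genuinely different in structure, though it rests on the same key fact $iJ\le I$: the paper applies Weyl's monotonicity principle directly to $A^{1/2}(iJ)A^{1/2}\le A$, which yields the \emph{upper} bound $d_j(A)\le\lambda^{\uparrow}_{n+j}(A)$ immediately, and then obtains the lower bound by passing to $A^{-1}$; you instead derive the \emph{lower} bound first by a Courant--Fischer max-min argument on the generalized Rayleigh quotient $\langle z,iJz\rangle/\langle z,Az\rangle$ (in effect re-proving Weyl monotonicity in this special case, with the harmless observation that a negative Rayleigh quotient trivially satisfies the required bound), and then invert; your re-indexing $d_j(A^{-1})=1/d_{n-j+1}(A)$, $\lambda^{\uparrow}_j(A^{-1})=1/\lambda^{\uparrow}_{2n-j+1}(A)$ checks out. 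The paper's version of (ii) is shorter because Weyl monotonicity is quoted as a black box; yours is more self-contained and makes explicit the bridge between the symplectic minmax of Section 4 and the Euclidean Courant--Fischer, which has some expository value.
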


\begin{proof}
 \begin{itemize}
  \item[(i)] By the arguments seen in Section 3
$$\widehat{d_1} (A) = \| A^{1/2} J A^{1/2} \| \leq \| AJ \| = \|A\| = 
\lambda_1^{\downarrow} (A). $$
Arguing as before with $\Lambda^k A$ we get
$$\prod_{j=1}^{k} \widehat{d_j} (A) \leq \prod_{j=1}^{k} \lambda_j^{\downarrow} 
(A), \quad 1 \leq k \leq 2n.$$
When $k=2n$ both sides are equal to $\det\,A.$ This proves \eqref{eq66}.
\item[(ii)] It follows from the inequality $i J\le I$ that $ A^{1/2} i J  A^{1/2} \leq A.$ The eigenvalues of $ A^{1/2} i 
J  A^{1/2}$ arranged in increasing order are
$$- d_n (A) \leq \cdots \leq -d_1 (A) \leq d_1 (A) \leq \cdots \leq d_n(A),$$
and those of $A$ are
$$\lambda_1^{\uparrow}(A) \leq \cdots \leq \lambda_n^{\uparrow}(A) \leq  
\lambda_{n+1}^{\uparrow}(A) \leq \cdots \leq \lambda_{2n}^{\uparrow}(A).$$
By Weyl's monotonicity principle \cite{rbh},p.63
$$d_j(A) \leq \lambda_{n+j}^{\uparrow} (A) \qquad \mbox{for}\qquad 1 \leq j 
\leq n.$$
Replacing $A$ by $A^{-1}$ in this inequality we see that $\frac{1}{d_j(A)} \leq 
\frac{1}{\lambda_j^{\uparrow} (A)}$ for all $1 \leq j \leq n.$ This proves 
\eqref{eq67}.
 \end{itemize}
\end{proof}

\noindent{\bf{\it Caveat.}} In this paper we have chosen $J_{2n}=\begin{bmatrix}O & I\\
-I & O\end{bmatrix}$. Some authors choose instead $J_{2n}=J_2\oplus\cdots\oplus J_2\, $(n copies). Then the class of symplectic matrices, as well as the symplectic eigenvalues change. All our theorems remain valid with these changes.

\section*{Acknowledgements}
The authors thank Professor K. R. Parthasarathy and Dr. Ritabrata Sengupta for 
introducing them to this topic. The first author is supported by a J. C. Bose 
National Fellowship and the second author by a SERB Women's Excellence Award. 
The first author thanks Professor Qing-Wen Wang and the Department of 
Mathematics at Shanghai University for their warm hospitality in June 2015 
when a part of this work was done.

\vskip0.2in

\end{document}